\def\ps@pprintTitle{%
  \let\@oddhead\@empty
  \let\@evenhead\@empty
  \def\@oddfoot{\reset@font\hfil\thepage\hfil}
  \let\@evenfoot\@oddfoot
}
\newcommand{\MSOii}{\ensuremath{\text{MSO}_2}\xspace}
\newcommand{\EMSOii}{\ensuremath{\text{EMSO}_2}\xspace}
\newcommand{\N}{\ensuremath{\mathbf{N}}\xspace} 
\newcommand{\R}{\ensuremath{\mathbf{R}}\xspace}	
\renewcommand{\cal}{\mathcal}
\newcommand{\name}[1]{\textsc{#1}}
\newcommand{\YES}{\textsc{yes}}
\newcommand{\mc}{\mathcal}
\newcommand{\G}{\mathcal{G}}
\newcommand{\PPP}{\mathfrak{P}}
\newcommand{\bigO}{\mathcal{O}}
\newcommand{\Problem}[1]{\textsc{#1}\xspace}
\newcommand{\diam}{\ensuremath{\mathop{diam}}} 
\newcommand{\minor}{\ensuremath{\preceq_{{\mathit{m}}}\!}} 
\newcommand{\grad}{\ensuremath{\nabla}}
\newcommand{\notminor}{\ensuremath{\npreceq_{{\mathit{m}}}\!}}
\newcommand{\bound}{\ensuremath{\mathop{bd}}}    
\newcommand{\Rep}{\mathcal{R}(t,d)}
\newcommand{\prot}{\ensuremath{\rho(t,d)}}   
\newcommand{\YYYY}{\ensuremath{Y_0 \uplus Y_1 \uplus \cdots \uplus Y_\ell}\xspace} 
\newcommand{\primedYYYY}{\ensuremath{Y_0' \uplus Y_1' \uplus \cdots \uplus Y_\ell'}\xspace} 
\renewcommand{\le}{\leqslant}
\renewcommand{\leq}{\leqslant}
\renewcommand{\geq}{\geqslant}
\newcommand{\widthm}[1]{\ensuremath{\mathop\mathbf{#1}}\xspace}
\newcommand{\tw}{\widthm{tw}}
\newcommand{\pw}{\widthm{pw}}
\newcommand{\td}{\widthm{td}}
\newcommand{\width}{\widthm{width}}
\newcommand{\probpar}{\xi} 
\newcommand{\nab}{\mathop{\triangledown}}
\newcommand{\FII}{FII\xspace} 
\newcommand{\clos}{\ensuremath{\text{clos}}}
\DeclareMathOperator{\degree}{deg}
\newtheorem{theorem}{Theorem}[section]
\newtheorem{lemma}[theorem]{Lemma}
\newtheorem{corollary}[theorem]{Corollary}
\newtheorem{proposition}[theorem]{Proposition}
\newtheorem*{claim}{Claim}
\theoremstyle{definition}
\newtheorem{redrule}[theorem]{Reduction Rule}
\newtheorem{definition}[theorem]{Definition}
\newcommand*{\ie}{i.e.\@\xspace}
\newcommand*{\cf}{cf.\@\xspace}
\newcommand*{\wrt}{w.r.t.\@\xspace}
\newcommand*{\etc}{%
    \@ifnextchar{.}%
        {etc}%
        {etc.\@\xspace}%
}
\newenvironment{tightcenter}
 {\parskip=0pt\par\nopagebreak\centering}
 {\par\noindent\ignorespacesafterend}
\newlength{\RoundedBoxWidth}
\newsavebox{\GrayRoundedBox}
\newenvironment{GrayBox}[1]%
   {\setlength{\RoundedBoxWidth}{\textwidth-4.5ex}
    \def\boxheading{#1}
    \begin{lrbox}{\GrayRoundedBox}
       \begin{minipage}{\RoundedBoxWidth}}%
   {   \end{minipage}
    \end{lrbox}
    \begin{tightcenter}
    \begin{tikzpicture}%
       \node(Text)[draw=black!20,fill=white,rounded corners,%
             inner sep=2ex,text width=\RoundedBoxWidth]%
             {\usebox{\GrayRoundedBox}};
        \coordinate(x) at (current bounding box.north west);
        \node [draw=white,rectangle,inner sep=3pt,anchor=north west,fill=white] 
        at ($(x)+(6pt,.75em)$) {\boxheading};
    \end{tikzpicture}
    \end{tightcenter}}             
\newenvironment{defproblem}[1]{\noindent\ignorespaces%
                                \FrameSep=6pt%
                                \parindent=0pt%
                \begin{GrayBox}{\textsc{#1}}%
                \begin{tabular*}{\textwidth}{@{\hspace{.1em}} >{\itshape} p{1.6cm} p{0.8\textwidth} @{}}%
            }{
                \end{tabular*}%
                \end{GrayBox}%
                \ignorespacesafterend%
             }    
\renewenvironment{leftbar}[2]{%
  \MakeFramed {\advance\hsize-\width \FrameRestore}}%
{\endMakeFramed}
\def\compxy(#1){\mathcal{U}_{#1}}
\def\Nesetril{Ne\v{s}et\v{r}il\xspace}
\title{Kernelization Using Structural Parameters on Sparse Graph Classes\tnoteref{t1}}
\author[muni]{Jakub Gajarsk\'{y}}
\author[muni]{Petr Hlin\v{e}n\'{y}}
\author[muni]{Jan Obdr\v{z}\'{a}lek}
\author[muni]{Sebastian Ordyniak}
\author[rwth]{Felix Reidl}
\author[rwth]{Peter Rossmanith}
\author[rwth]{Fernando S\'{a}nchez Villaamil}
\author[rwth]{Somnath Sikdar}
\address[muni]{\small Faculty of Informatics, Masaryk University,  
Brno, Czech Republic, 
\texttt{\{gajarsky,hlineny,obdrzalek,ordyniak\}@fi.muni.cz}.
}
\address[rwth]{\small Theoretical Computer Science, Department of Computer Science,
RWTH Aachen University, Aachen, Germany, 
\texttt{\{reidl,rossmani,fernando.sanchez,sikdar\}@cs.rwth-aachen.de}.
}
\date{}
\begin{document}

\begin{frontmatter}
    \begin{abstract}
    Meta-theorems for polynomial (linear) kernels 
have been the subject of intensive research in parameterized complexity.
Heretofore, meta-theorems for linear kernels exist
on graphs of bounded genus, $H$-minor-free graphs, 
and $H$-topological-minor-free graphs. To the best 
of our knowledge, no meta-theorems for polynomial kernels are known 
for any larger sparse graph classes; e.g.,
for classes of bounded expansion or for nowhere dense ones.
In this paper we prove such meta-theorems for the two latter cases.
More specifically, we show that graph problems that have finite integer index (\FII)
have linear kernels on graphs of bounded expansion when 
parameterized by the size of a modulator to constant-treedepth graphs. 
For nowhere dense graph classes, our result yields almost-linear kernels.
While our parameter may seem rather strong, we argue
that a linear kernelization result on graphs of bounded expansion 
with a weaker parameter (than treedepth modulator) would fail to 
include some of the problems covered by our framework.    
Moreover, we only require the problems to have \FII on graphs of 
constant treedepth. This allows us to prove linear kernels for problems such as 
\name{Longest Path/Cycle}, 
\name{Exact $s,t$-Path},
\name{Treewidth},
and \name{Pathwidth},
which do not have \FII on general graphs
(and the first two not even on bounded treewidth graphs).

    \end{abstract}
\end{frontmatter}

\section{Introduction}\label{sec:Introduction}
Data preprocessing has always been a part of algorithm design.
The last decade has seen steady progress in the area of \emph{kernelization},
an area which deals with the design of polynomial-time preprocessing algorithms. 
These algorithms compress an input instance of a parameterized problem into an equivalent 
output instance whose size is bounded by some function of the parameter.
Parameterized complexity theory guarantees the existence of such \emph{kernels}
for problems that are \emph{fixed-parameter tractable}. Some problems
admit stronger kernelization in the sense that the size of the output instance is bounded by 
a polynomial (or even linear) function of the parameter, the  
so-called \emph{polynomial (or linear) kernels}. \looseness-1

Of great interest are \emph{algorithmic meta-theorems}, results that 
focus on problem classes instead of single problems.  In the area of 
graph algorithms, such meta-theorems usually have the following form: 
all problems with a specific property admit, on a specific graph class, 
an algorithm of a specific type. We are specifically interested
in meta-theorems that concern kernelization, for which a solid groundwork
already exists. Before we delve into the history, we need to quickly
establish the keystone property that drives all these meta-theorems:
the notion of \emph{finite integer index} (\FII).

Roughly speaking, a graph problem has FII if there exists a
finite set~$\cal S$ of graphs such that every instance of the problem can be
represented by a member of~$\cal S$ alongside an integer ``offset''. 
This property is the basis of the
\emph{protrusion replacement rule} whereby protrusions (pieces of the input
graph satisfying certain requirements) are replaced by members of the
set~$\cal S$.  Finite integer index is an intrinsic
property of the problem itself and is not directly related to whether it can
be expressed in a certain logic. In particular, \MSOii expressibility does not
imply \FII (see~\cite{BFLPST09} for sufficiency conditions for a problem
expressible in counting MSO to have \FII). As an example of this phenomenon,
\name{Hamiltonian Path} has \FII on general graphs whereas \name{Longest Path}
does not, although both are \EMSOii-expressible. 

Now, the first steps towards a kernelization meta-theorem 
appeared in a paper by Guo and
Niedermeier who provided a prescription of how to design linear kernels on
planar graphs for graph problems which satisfy a certain distance
property~\cite{GN07a}.
Their work built on the seminal paper by Alber, Fellows, and Niedermeier who
showed that \name{Dominating Set} has a linear kernel on planar
graphs~\cite{AFN04}. This was followed by the first true meta-theorem in this
area by Bodlaender et al.~\cite{BFLPST09} who showed that graph problems that
have \FII and satisfy a
property called \emph{quasi-coverable}\footnote{This property was called
\emph{quasi-compactness} in earlier version of~\cite{BFLPST09}}, admit linear kernels on bounded
genus graphs. Shortly after~\cite{BFLPST09} was published, Fomin et
al.~\cite{FLST10} proved a meta-theorem for linear kernels on $H$-minor-free
graphs, a graph class that strictly contains graphs of bounded genus. A rough
statement of their main result states that any graph problem that has \FII, is
\emph{contraction bidimensional}, and satisfies a \emph{separation property} 
has a linear kernel on graphs that exclude a fixed graph as minor. This result was, in turn, generalized
in~\cite{KLPRRSS12} to $H$-topological-minor-free graphs, which strictly
contain $H$-minor-free graphs. Here, the problems are required to have \FII 
and to be \emph{treewidth-bounding}: A graph problem is
treewidth-bounding if \YES-instances have a vertex set of size
linear in the parameter, the deletion of which results in a graph of bounded treewidth.
Such  a vertex set is called a 
\emph{modulator to bounded treewidth}. Prototypical problems that satisfy this
condition are \name{Feedback Vertex Set} and \name{Treewidth $t$-Vertex
Deletion}\footnote{For problem definitions, see Appendix.}, when parameterized
by the solution size.

We see that while these meta-theorems (viewed in chronological order) steadily covered 
larger graph classes, the set of problems captured in their framework diminished 
as the other precondition(s) became stricter. Surprisingly, this is not due
to said preconditions: It turns out that they can be
expressed in a unified manner and are therefore equally restrictive.
The combined properties of
bidimensionality and separability (used to prove
the result on $H$-minor-free graphs) imply that the problem is treewidth-%
bounding (\cf Lemma~3.2 and~3.3 in~\cite{FLST10}). Quasi-coverability
on bounded genus graphs implies the same (\cf Lemma~6.4 in~\cite{BFLPST09}). 
This demonstrates that all three previous meta-theorems on linear kernels
implicitly or explicitly used treewidth-boundedness. Hence the diminishing
set of problems can be blamed on the increasingly weaker
interaction of the graph classes with the problem parameters, not the (only
apparently) stricter precondition on the problems.

This insight motivates a different view on previous meta-theorems: 
problems that have \FII admit linear kernels if parameterized by
a \emph{treewidth modulator}. This view replaces the natural problem parameter---whose
structural impact diminishes in larger sparse graph classes---by an explicit
\emph{structural} parameter which retains the crucial interaction between
parameter and graph class. It also gives us, as we will see, the freedom
to adapt the parameterization to our needs. 

The next well-established level in the sparse-graph hierarchy \cite{NOdM12}
is formed by the classes of \emph{bounded expansion}. The notion was introduced by 
\Nesetril and Ossona de Mendez~\cite{NOdM08} and subsumes graph classes
excluding a fixed graph as a topological minor. It turns out that for
these classes the serviceable parametrization by a treewidth modulator
cannot work if we aim for linear kernels:
Any graph class~$\cal G$ can be transformed into a
class~$\tilde{\cal G}$ of bounded expansion by replacing every graph $G \in
\cal G$ with~$\tilde{G}$, obtained in turn by replacing each edge of~$G$ by a path on
$|V(G)|$ vertices. For problems like \name{Treewidth $t$-Vertex Deletion} and,
in particular, \name{Feedback Vertex Set} this operation neither changes the
instance membership nor does it increase the parameter. As both the problems do
not admit kernels of size $O(k^{2-\epsilon})$ unless coNP $\subseteq$
NP/poly, by a result of Dell and Melkebeek~\cite{DM10}, 
a linear kernelization result on bounded-expansion classes of graphs 
and under the treewidth-modulator parameterization 
would have to exclude both these natural problems. 

In this work, we identify a structural parameter that indeed does allow linear kernels for
all problems that have \FII on graph classes of bounded expansion---%
the size of a tree\emph{depth} modulator. This
parameter not only increases under edge subdivisions (a necessary prerequisite
as we now know), but it also provides exactly the structure that seems
necessary to obtain such a result. To put this parameterization
into context, let us recap some previous work on structural parameters.
Even outside the realm of sparse graphs, they have been
used to zero in on those aspects of problems that make them intractable---a 
development that certainly fits the overall agenda of parameterized complexity.
This research of alternative parameterizations has given rise to what is 
called the \emph{parameterized ecology}~\cite{FJR13}. 

Already the perhaps strongest structural parameter for graph-related 
problems---the vertex cover number---makes up an interesting nieche of said
ecology, as we summarize now.
Many problems that are W-hard or otherwise difficult to parameterize such as
\name{Longest Path}~\cite{BJK11}, \name{Cutwidth}~\cite{CLPPS11},
\name{Bandwidth}, \name{Imbalance}, \name{Distortion}~\cite{FLMRS08},
\name{List Coloring},
\name{Precoloring Extension}, \name{Equitable Coloring}, \name{L($p$,$1$)-Labeling},
and \name{Channel Assignment}~\cite{FGK09} are (easily) fpt when parameterized by
the vertex cover number. 
Some generalizations of vertex cover have also
been successfully used as a parameter, e.g.,~\cite{DK12,Gan11}. 
Even problems that do admit kernels in general or are fpt can benefit
from such a strong structural parameter---for example,
\name{Odd Cycle Transversal} (which admits a randomized and highly technical kernel), 
\name{Chordal Deletion} (which is fpt but does not admit a polynomial kernel), and
\name{$\cal F$-Minor-Free Deletion}~\cite{FJP12}. 
On the other hand, some problems do not
admit polynomial kernelization even with this strong additional parametrization:
\name{Dominating Set}, for example, has no polynomial kernel when
parameterized by the solution size \emph{and} the vertex cover number~\cite{DLS09}.

In light of previous work on structural parameters 
and the fact that a modulator to bounded
treedepth is a significantly weaker parameter than the vertex cover number (which is
the special case of a modulator to treedepth one), we conclude that
treedepth modulator is a well-motived choice in our case.

\subsubsection*{Our contribution} 
We show that, assuming \FII, a parameterization by
the size of a modulator to bounded treedepth allows for linear
kernels in linear time on graph classes of bounded expansion. The same parameter
yields almost-linear kernels on nowhere dense graph classes, which strictly contain 
those of bounded expansion. In particular, nowhere dense
classes are the largest collections of graphs that may still be called sparse~\cite{NOdM12}. 
In these results we do not require a treedepth modulator to be supplied as
part of the input, as we show that it can be approximated to within a constant factor. 

Furthermore, we only need \FII to hold on graphs of bounded treedepth, thus including
problems which do not have \FII in general. Some problems that are included
because of this relaxation are  \name{Longest Path/Cycle}, \name{Pathwidth} and
\name{Treewidth}, none of which have polynomial kernels with respect 
to their standard parameters, even on sparse graphs, since they admit simple
AND/OR-Compositions~\cite{BDFH09}. Problems covered by our framework include also
\name{Hamiltonian Path/Cycle}, several variants of \name{Dominating Set}, 
\name{(Connected) Vertex Cover}, \name{Chordal Vertex Deletion}, 
\name{Feedback Vertex Set}, \name{Induced Matching}, \name{Branchwidth}
and \name{Odd Cycle Transversal}.
In particular, we cover all problems included in earlier frameworks~\cite{BFLPST09,FLST10,KLPRRSS12}.
We emphasize, however, that this paper does not
subsume the former results due to our stricter parameter.

\subsubsection*{Organisation}
Our notation and the main definitions pertaining to graph classes can
all be found in Section~\ref{sec:Preliminaries}.
Section~\ref{sec:Protrusions} deals with the notion of finite integer index
and the protrusion machinery. In Section~\ref{sec:Kernels}, we prove our meta-
theorems for graph classes of bounded expansion and for nowhere dense ones.
Section~\ref{sec:fii_pw_tw} is devoted to the proof that the problems
\name{Treewidth} and \name{Pathwidth} have \FII in appropriate graph classes.
We conclude in Section~\ref{sec:Conclusion} with some open problems.
In the appendix, we define (some of) the graph-algorithmic problems that
we deal with in this paper.

\section{Preliminaries}\label{sec:Preliminaries}
\noindent We use standard graph-theoretic notation (see~\cite{Die10} for any undefined
terminology). All our graphs are finite and simple. Given a graph $G$, we
use $V(G)$ and $E(G)$ to denote its vertex and edge sets. For convenience we
assume that $V(G)$ is a totally ordered set, and use $uv$ instead of $\{u,v\}$ to denote the edges of
$G$. For~$X \subseteq V(G)$, we let~$G[X]$ denote the subgraph of $G$
induced by $X$, and we define $G-X:=G[V(G) \setminus X]$.  Since we will
mainly be concerned with sparse graphs in this paper, we let $|G|$ denote
the number of vertices in the graph $G$. The distance $d_G(v,w)$ of two
vertices $v,w \in V(G)$ is the length (number of edges) of a shortest
$v,w$-path in $G$ and $\infty$ if $v$ and $w$ lie in different connected
components of $G$. The diameter $\diam(G)$ of a graph is the length of
the longest shortest path between all pairs of vertices in $G$.
A complete subgraph of $G$ is called a {\em clique} and
we denote by $\omega(G)$ the largest size of a clique of
$G$.

The concept of neighborhood is used heavily throughout the paper.
The neighborhood of a vertex $v \in V(G)$ is the set $N^G(v)=\{w\in V(G)
\mid vw\in E(G)\}$, the degree of $v$ is $\degree^G(v)=|N^G(v)|$, and the closed neighborhood of~$v$ is defined as $N^G[v] :=
N^G(v) \cup \{v\}$. We extend this naturally to sets of vertices and
subgraphs: For $S\subseteq V(G)$ we denote $N^G(S)$ the set of vertices in
$V(G)\setminus S$ that have at least one neighbor in $S$, and for a subgraph
$H$ of $G$ we put $N^G(H)=N^G(V(H))$. Finally if $X$ is a subset of vertices
disjoint from $S$, then $N_X^G(S)$ is the set $N^G(S)\cap X$ (and similarly
for $N_X^G(H)$). Given a graph~$G$ and a set~$W \subseteq V(G)$, we also
define~$\partial_G(W)$ as the set of vertices in~$W$ that have a neighbor
in~$V \setminus W$. Note that $N^G(W) = \partial_G(V(G) \setminus W)$. A
graph $G$ is $d$-degenerate if every subgraph of $G' \subseteq G$ contains a vertex 
$v \in V(G')$ with $deg^G(v)\leq d$. The degeneracy of $G$ is the smallest 
$d$ such that $G$ is $d$-degenerate.

In the rest of the paper we often drop the index $G$ from all the notation
if it is clear which graph is being referred to.

\subsection{Minors and shallow minors}
We start by defining the notion of edge contraction. Given an edge~$e = uv$ of a graph~$G$, we let~$G/e$ denote the graph obtained
from~$G$ by \emph{contracting} the edge~$e$, which amounts to deleting the
endpoints of~$e$, introducing a new vertex~$w_{uv}$, and making it adjacent to
all vertices in $(N^G(u) \cup N^G(v)) \setminus \{u,v\} $. By contracting $e = uv$ 
to the vertex~$w$, we mean that the vertex~$w_{uv}$ is renamed
as~$w$. \emph{Subdividing} an edge is, in a sense, an opposite operation to contraction.
A graph $G$ is called a \emph{$\leq\!k$-subdivision} of a graph $H$ if
(some) edges of $H$ are replaced by paths of length at most~$k+1$.

A \emph{minor} of~$G$ is a graph obtained from a subgraph of~$G$ by contracting 
zero or more edges.
In a more general view, if $H$ is isomorphic to a minor of~$G$, then we call
$H$ a minor of $G$ as well, and we write~$H \minor G$.
A graph $G$ is {\em $H$-minor-free} if $H \notminor G$. 

We next introduce the notion of a shallow minor.
\begin{definition}[Shallow minor~\cite{NOdM12}] 
For an integer~$d$, a graph~$H$ is a \emph{shallow minor at depth~$d$} of~$G$ if there
exists a set of disjoint subsets $V_1, \ldots, V_p$ of~$V(G)$ such that
\begin{enumerate}
\item each graph $G[V_i]$ has radius at most~$d$, meaning that there exists~$v_i \in V_i$
(a \emph{center}) such that every vertex in~$V_i$ is within distance at most~$d$ in~$G[V_i]$;
\item there is a bijection $\psi \colon V(H) \rightarrow \{V_1, \ldots, V_p\}$ such that for $u,v \in V(H)$,
$uv \in E(H)$ only if there is an edge in~$G$ with an endpoint each in $\psi(u)$ and $\psi(v)$.
\end{enumerate}
\end{definition}
\noindent Note that if $u,v \in V(H)$, $\psi(u) = V_i$, and $\psi(v) = V_j$ then $d_G(v_i,v_j) \leq (2d+1) \cdot d_H(u,v)$. 
The class of shallow minors of~$G$ at depth~$d$ is denoted by $G \nab d$. This notation
is extended to graph classes~$\cal G$ as well: $\cal G \nab d = \bigcup_{G \in \cal G} G \nab d$. 

\subsection{Parameterized problems, kernels and treewidth}
In this paper we deal with parameterized problems where the value of the parameter is not explicitly 
specified in the input instance. This situation is slightly different from the usual case where the 
parameter is supplied with the input and a parameterized problem is defined as sets of tuples~$(x,k)$
as in~\cite{DF99}. As such, we find it convenient to adopt the definition of Flum and Grohe~\cite{FG06}
and we feel that this is the approach one might have to choose when dealing with generalized parameters
as is done in this paper. 

Let $\Sigma$ be a finite alphabet. A parameterization of $\Sigma^{*}$ is a mapping $\kappa \colon \Sigma^{*} \rightarrow \N_0$ 
that is polynomial time computable. A parameterized problem~$\Pi$ is a pair $(Q, \kappa)$ consisting of a set  
$Q \subseteq \Sigma^{*}$ of strings over~$\Sigma$ and a parameterization $\kappa$ over~$\Sigma^{*}$. 
A parameterized problem~$\Pi$ is \emph{fixed-parameter tractable} if there
exist an algorithm~$\mathcal{A}$, a computable function~$f \colon \N \rightarrow \N$ and a polynomial~$p$ 
such that for all $x \in \Sigma^{*}$, $\mathcal{A}$ decides~$x$ in time $f(\kappa(x)) \cdot p(|x|)$.

\begin{definition}[Graph problem]
    A \emph{graph problem}~$\Pi$ is a set of pairs $(G,\probpar)$, where~$G$ is a graph and 
    $\probpar \in \N_0$, such that for all graphs $G_1, G_2$ and all $\probpar \in \N_0$,
    $G_1 \cong G_2$ implies that $(G_1,\probpar) \in \Pi$  iff $(G_2,\probpar) \in \Pi$.
    For a graph class $\cal G$, we define $\Pi_{\cal G}$ as the set of pairs $(G, \probpar) \in \Pi$ 
    such that $G \in \cal G$.
\end{definition}

\begin{definition}[Kernelization]
    A \emph{kernelization} of a parameterized problem $(Q,\kappa)$ over the alphabet~$\Sigma$ 
   is a polynomial-time computable function $A \colon \Sigma^{*} \rightarrow \Sigma^{*}$ such that 
  for all $x \in \Sigma^{*}$, we have
    \begin{enumerate}
        \item $x \in Q$ if and only if $A(x) \in Q$,
        \item $|A(x)| \leq g(\kappa(x))$,
    \end{enumerate}
    where~$g$ is some computable function. The function $g$ is called
    the \emph{size} of the kernel. If $g(\kappa(x)) = \kappa(x)^{\bigO(1)}$ or $g(\kappa(x)) = \bigO(\kappa(x))$,
    we say that $\Pi$ admits a \emph{polynomial kernel} and a \emph{linear kernel}, respectively.
\end{definition}

\begin{definition}[Treewidth]
A \emph{tree decomposition}~$\mathcal{T}$ of an (undirected) graph $G=(V,E)$ is a pair 
$(T,\chi)$, where $T$ is a tree and $\chi$ is a function that assigns each 
tree node $t$ a set $\chi(t) \subseteq V$ of vertices such that the following 
conditions hold: 
\begin{itemize}
\item[(P1)] For every vertex~$u \in V$, there is a tree node~$t$ such that 
	$u \in \chi(t)$.
\item[(P2)] For every edge $\{u,v\} \in E(G)$ there is a tree node
  $t$ such that $u,v\in \chi(t)$.
\item[(P3)] For every vertex $v \in V(G)$,
  the set of tree nodes $t$ with $v\in \chi(t)$ forms a subtree of~$T$.
\end{itemize}
The sets $\chi(t)$ are called \emph{bags} of the decomposition~$\mathcal{T}$ and $\chi(t)$ 
is the bag associated with the tree node~$t$. 
The \emph{width} of a tree decomposition 
$(T,\chi)$ is the size of a largest bag minus~$1$.  A tree decomposition of
minimum width is called \emph{optimal}.  The \emph{treewidth} of a graph $G$,
denoted by $\textup{tw}(G)$, is the
width of an optimal tree decomposition of~$G$. 
\end{definition}

Let $\mathcal{T}=(T,\chi)$ be a tree decomposition of a graph $G$ and let
$G'$ be an induced subgraph of $G$. The \emph{projection} of $\mathcal{T}$
onto $G'$, denoted by $\mathcal{T}|G'$,
is the pair $(T,\chi')$ where $\chi'(t)=\chi(t) \cap V(G')$ for every $t \in
V(T)$. It is well-known that $\mathcal{T}|G'$ is a tree decomposition of
$G'$.

A \emph{path decomposition} of a graph $G$ is a tree decomposition
$(T,\chi)$ such that $T$ is a path instead of a tree. 
All notions and 
definitions introduced for tree decompositions above apply in the same way
for path decompositions. 
The
\emph{pathwidth} of $G$, denoted by $\textup{pw}(G)$, is the width of an
optimal path decomposition of $G$.

\subsection{Grad and graph classes of bounded expansion}

Let us recall the main definitions pertaining to the notion of graphs of
bounded expansion. We follow the recent book by \Nesetril and Ossona de Mendez~\cite{NOdM12}.
\begin{definition}[Greatest reduced average density (grad)~\cite{NOdM08,NMW12}]
	Let $\cal G$ be a graph class. Then the \emph{greatest reduced average density} of $\cal G$
	with \emph{rank $d$} is defined as 
	\[
		\grad_d(\cal G) = \sup_{H \in \cal G \nab d} \frac{|E(H)|}{|V(H)|}.
	\]
\end{definition}

\noindent This notation is also used for graphs via the convention that $\grad_d(G) := \grad_d(\{G\})$.
In particular, note that $G \nab 0$ denotes the set of subgraphs of $G$ and hence $2\grad_0(G)$ is 
the maximum average degree of all subgraphs of~$G$. The \emph{degeneracy} of~$G$ is, therefore, exactly~$2\grad_0(G)$. 
\begin{definition}[Bounded expansion~\cite{NOdM08}]
	A graph class $\cal G$ has \emph{bounded expansion} if there exists a
	function $f \colon \N \rightarrow \R$ (called the \emph{expansion function}) such 
	that for all $d \in \N$, $\grad_d(\cal G) \leq f(d)$.
\end{definition}

\noindent If $\cal G$ is a graph class of bounded expansion with expansion function~$f$, we say that $\cal G$ 
has \emph{expansion bounded by $f$}. An important relation we make use of later is: $\grad_d(G) = \grad_0(G \nab d)$, 
\ie the grad of $G$ with rank $d$ is precisely one half the maximum average degree of subgraphs of its 
depth~$d$ shallow minors.

Another important notion that we make use of extensively is that of treedepth. 
In this context, a \emph{rooted forest} is a disjoint union of rooted trees. 
For a vertex~$x$ in a tree~$T$ 
of a rooted forest, the \emph{height} (or {\em depth})
of~$x$ in the forest is the number of vertices in the path from 
the root of~$T$ to~$x$. The \emph{height of a rooted forest} is the maximum height of a vertex of the forest. 
\begin{definition}[Treedepth]
  Let the \emph{closure} of a rooted forest~$\cal F$ is the graph
  $\clos(\cal F)=(V_c,E_c)$ with the vertex set 
  $V_c=\bigcup_{T \in \cal F} V(T)$ and the edge set
  $E_c=\{xy \colon \text{$x$ is an ancestor of $y$ in some $T\in\cal F$}\}$.
  A \emph{treedepth decomposition}
  of a graph $G$ is a rooted forest $\cal F$ such that $G \subseteq \clos(\cal F)$.
  The \emph{treedepth} $\td(G)$ of a graph~$G$ is the minimum height of
  any treedepth decomposition of $G$. 
\end{definition}

\begin{proposition}[\hspace{-.02em}\cite{NOdM12}]\label{pro:compute-td}
  Given a graph $G$ with $n$ nodes and a constant $w$, it is possible to
  decide whether $G$ has treedepth at most $w$, and if so, to compute an
  optimal treedepth decomposition of $G$ in time $\bigO(n)$.
\end{proposition}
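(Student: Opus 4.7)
The plan is to reduce to dynamic programming on a bounded-width tree decomposition. The key observation is that $\td(G) \leq w$ implies $\tw(G) \leq w-1$: for any treedepth decomposition of height $w$ one obtains a tree decomposition by taking, for each vertex $v$, the bag consisting of $v$ together with all its ancestors in the forest, and arranging these bags along the forest structure. So I would first apply Bodlaender's linear-time algorithm with parameter $w-1$. If it reports $\tw(G) > w-1$ we immediately output NO, since this certifies $\td(G) > w$. Otherwise we obtain, in $\bigO(n)$ time, a nice tree decomposition $(T,\mathcal{W})$ of $G$ of width at most $w-1$.

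On this decomposition I would run a bottom-up dynamic program. For each node $x$ of $T$, the DP maintains a table of \emph{signatures} describing, for every vertex $v \in W_x$, its depth in a tentative height-$w$ treedepth forest of the already-processed induced subgraph, together with the ordered chain of ancestors of $v$ that still lie inside $W_x$. Since $|W_x| \leq w$ and depths lie in $\{1, \ldots, w\}$, the number of distinct signatures is bounded by some function $f(w)$. The standard introduce/forget/join operations propagate signatures according to the obvious semantics, with the crucial rule that whenever a vertex $v$ is forgotten, every edge of $G$ incident to $v$ must already be covered by an ancestor/descendant pair within the current chain stored in the signature. An accepting signature at the root certifies $\td(G) \leq w$, and backtracking through the tables yields an explicit optimal treedepth decomposition.

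The main care lies in the forget step, where edges incident to the departing vertex must be accounted for once and for all, since they are no longer visible in later bags; and symmetrically in the introduce step, where the ancestor chain of a new vertex must be chosen consistently with any future neighbor it may acquire. These constraints are precisely what signatures are designed to track, and the verification is a standard induction on the nice tree decomposition in the same spirit as DPs for vertex-ranking or elimination-ordering parameters. Since $w$ is constant, the table at each node has constant size and can be updated in constant time, giving a total running time of $f(w) \cdot \bigO(n) = \bigO(n)$.
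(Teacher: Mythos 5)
The paper does not prove this proposition; it is quoted from Ne\v{s}et\v{r}il and Ossona de Mendez, whose argument runs: a graph of treedepth at most $w$ has no path on $2^w$ vertices, so a DFS forest either has depth $\geq 2^w$ (certifying a NO answer) or is itself a treedepth decomposition of height $< 2^w$, which bounds the pathwidth by a constant; one then decides $\td(G)\leq w$ (and extracts a witness) by MSO model checking or dynamic programming on the resulting decomposition. Your route is the same in spirit but swaps the first step: you use $\td(G)\leq w \Rightarrow \tw(G)\leq w-1$ (correct, via the root-path bags) and invoke Bodlaender's algorithm, then run an explicit DP. Both first steps are valid linear-time reductions to bounded width; the DFS route is more elementary, yours gets a tighter width bound ($w-1$ versus $2^w-1$), which is irrelevant for constant $w$.

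The one place where your sketch is genuinely underspecified is the DP signature. Recording, for $v\in W_x$, only its depth and ``the ordered chain of ancestors of $v$ that still lie inside $W_x$'' is not enough, because in a treedepth decomposition an ancestor of $v$ need not be a neighbour of $v$ and need not ever share a bag with $v$: it may already have been forgotten, or may only be introduced later and inserted \emph{above} $v$ in the forest. The consistency you must enforce is that each depth level on $v$'s root-to-$v$ path is occupied by at most one vertex over the whole run, and that vertices sharing a bag agree on the common prefix of their root paths. The standard fix is to let the signature record, for each bag vertex, the set of ancestor depths already committed to processed (forgotten) vertices versus those still available, together with the induced prefix-tree structure on the bag; this is still a function of $w$ only, so the tables remain constant-size and the $f(w)\cdot\bigO(n)$ bound survives. (This DP is known to work but is delicate --- it is essentially the content of later work by Reidl, Rossmanith, S\'anchez Villaamil and Sikdar on parameterized treedepth computation.) Alternatively, you can sidestep the bookkeeping entirely by observing that $\td(G)\leq w$ is MSO-expressible for fixed $w$ and appealing to Courcelle's theorem on your width-$(w-1)$ decomposition, which is the shortest complete route once bounded width is established.
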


\noindent
We list some well-known facts about graphs of bounded treedepth. 
Omitted proofs can be found in~\cite{NOdM12}.
\begin{enumerate}
\item If a graph has no path with more than~$d$ vertices, then its treedepth is at most~$d$. 
\item If~$\td(G) \leq d$, then~$G$ has no paths with~$2^d$ vertices and, in particular, any DFS-tree 
of~$G$ has depth at most~$2^d - 1$.
\item If~$\td(G) \leq d$, then~$G$ is $d$-degenerate and hence has at most~$d \cdot |V(G)|$ edges.
\item If~$\td(G) \leq d$, then $\tw(G)\leq \pw(G) \leq d-1$.  
\end{enumerate} 
A useful way of thinking about graphs of bounded treedepth is that they are
(sparse) graphs with no long paths. 

For a graph~$G$ and an integer~$d$, a \emph{modulator to treedepth~$d$} of~$G$ 
is a set of vertices $M \subseteq V(G)$ such that $\td(G - M) \leq d$. 
The size of a modulator is the cardinality of the set $M$.

Finally, we need the following well-known result on degenerate graphs.

\begin{proposition}[\hspace{-0.02em}\cite{Woo07}]\label{prop:DegenerateNumCliques}
	Every $d$-degenerate graph $G$ with $n \geq d$ vertices has at most $2^d(n-d+1)$ cliques.
\end{proposition}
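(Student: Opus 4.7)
\medskip
\noindent\textbf{Proof plan.} The natural approach is to exploit a \emph{degeneracy ordering} of~$G$ and count cliques by their ``earliest'' vertex. Since~$G$ is $d$-degenerate, iteratively removing a vertex of minimum degree yields a linear ordering $v_1, v_2, \ldots, v_n$ of $V(G)$ such that each $v_i$ has at most~$d$ neighbors among $\{v_{i+1}, \ldots, v_n\}$; this is the only structural fact I plan to use.

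\medskip
\noindent First, assign to every clique $K \subseteq V(G)$ its \emph{earliest vertex}~$v_i$, namely the one with smallest index in the ordering. Because $K$ is a clique, every other vertex of~$K$ must be a neighbor of~$v_i$ with larger index, so $K \setminus \{v_i\}$ is a subset of $N^G(v_i) \cap \{v_{i+1}, \ldots, v_n\}$. As this set has size at most $\min(d, n-i)$, the number of cliques with earliest vertex~$v_i$ is bounded by $2^{\min(d,\, n-i)}$ (including the singleton $\{v_i\}$ itself, which corresponds to the empty subset).

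\medskip
\noindent Summing over all vertices gives
\[
    \#\text{(non-empty cliques)} \;\leq\; \sum_{i=1}^{n} 2^{\min(d,\, n-i)}
    \;=\; (n-d)\cdot 2^d \;+\; \sum_{j=0}^{d-1} 2^j
    \;=\; 2^d(n-d+1) - 1,
\]
where the split uses $n \geq d$ so that the first sum is non-empty. Adding the empty clique (or simply absorbing the $-1$) yields the stated bound of $2^d(n-d+1)$.

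\medskip
\noindent There is no real obstacle: the whole argument rests on the existence of a degeneracy ordering and the trivial observation that a clique is determined by its earliest vertex together with a subset of that vertex's later neighborhood. The only point requiring care is the handling of the last $d-1$ vertices, where the geometric series $\sum_{j=0}^{d-1} 2^j = 2^d-1$ is exactly what brings the bound from the crude $n \cdot 2^d$ down to $2^d(n-d+1)$.
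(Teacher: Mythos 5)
The paper does not prove this proposition; it is imported verbatim from Wood~\cite{Woo07}. Your argument --- order the vertices by a degeneracy ordering, charge each clique to its earliest vertex, and bound the cliques charged to $v_i$ by $2^{\min(d,\,n-i)}$ --- is correct and is essentially Wood's original proof, including the careful handling of the last $d$ positions that yields $2^d(n-d+1)-1$ non-empty cliques (the stated bound of $2^d(n-d+1)$ counts the empty clique, and is tight for $K_d$ joined to an independent set on $n-d$ vertices).
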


\section{The Protrusion Machinery}\label{sec:Protrusions}
In this section, we recapitulate the main ideas of the protrusion machinery developed
in~\cite{BFLPST09,FLST10}.  

\begin{definition}[$r$-protrusion~\cite{BFLPST09}]
    Given a graph~$G$, a set~$W \subseteq V(G)$ is a \emph{$r$-protrusion}
    of~$G$ if $|\partial_G (W)| \leq r$ and $\tw(G[W]) \le r-1$.%
    \footnote{We want the bags in a tree-decomposition of $G[W]$ to be of
      size at most~$r$.} We call~$\partial_G (W)$ the \emph{boundary} and $|W|$ the \emph{size}
    of the protrusion $W$. 
\end{definition}
\noindent
Thus an $r$-protrusion in a graph can be seen as a subgraph that is separated from the rest 
of the graph by a small boundary and, in addition, has small treewidth. See Figure~\ref{fig:Protrusion}.

\begin{figure}
\centering
  \includegraphics[width=10cm]{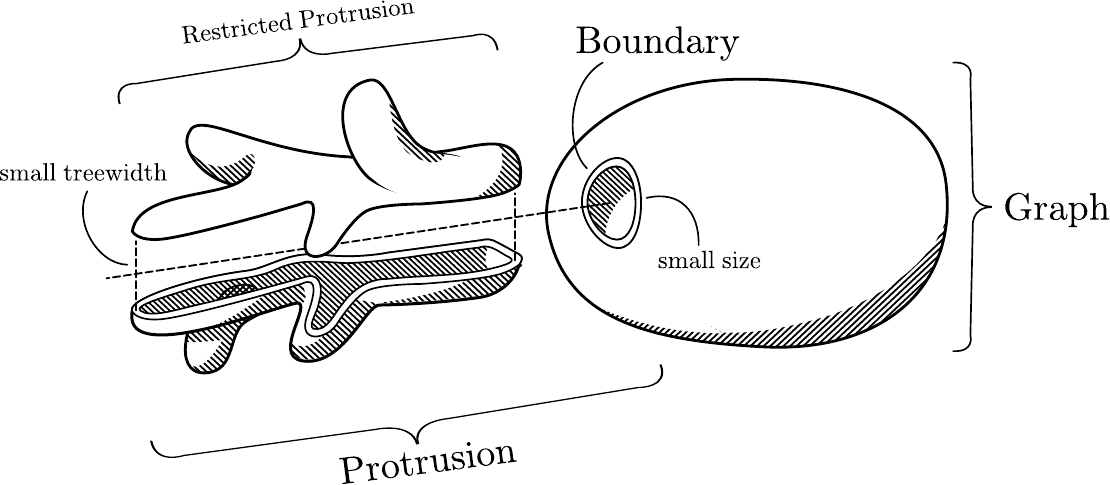}
  \caption{The anatomy of a protrusion.}\label{fig:Protrusion}
\end{figure}  

A \emph{$t$-boundaried graph} is a pair $(G, \bound(G))$, where 
$G$ is a graph and~$\bound(G) \subseteq V(G)$ is a set of~$t=|\bound(G)|$ 
vertices with distinct {\em labels} from the set $\{1, \ldots, t\}$. The graph~$G$ 
is called the \emph{underlying unlabeled graph} and $\bound(G)$ is called the
\emph{boundary}.\footnote{Usually denoted by $\partial(G)$, but this collides
with our usage of $\partial$.}  Given a
graph class $\cal G$, we let~$\cal G_{t}$ denote the class of $t$-boundaried
graphs $(G,\bound(G))$ where~$G \in \cal G$. 

For $t$-boundaried graphs $(H,\bound(H))$ 
and $(G,\bound(G))$, we say that $(H,\bound(H))$ is a subgraph of 
$(G,\bound(G))$ if $H \subseteq G$ and $\bound(H) = \bound(G)$. We say that 
$(H,\bound(H))$ is an induced subgraph of $(G,\bound(G))$ if for some $X \subseteq V(G)$, 
$H = G[X]$ and $\bound(H) = \bound(G)$. We say that the boundaries
of two $t$-boundaried graphs $(G, \bound(G))$ and 
$(H, \bound(H))$ are \emph{identical} if the function mapping 
each vertex of $\bound(G)$ to that vertex of $\bound(H)$ with the same label 
is an isomorphism between $G[\bound(G)]$ and $H[\bound(H)]$. 
Note that in the case of $(H, \bound(H))$ being an induced subgraph of 
$(G,\bound(G))$, the boundaries are identical by definition.
In the following, we will denote a $t$-boundaried graph $(G, \bound(G))$
shortly by~$\widetilde{G}$ to avoid cumbersome notation.

\begin{definition}[Gluing and ungluing]
    For $t$-boundaried graphs $\widetilde{G}_1$ and $\widetilde{G}_2$,
    we let~$\widetilde{G}_1 \oplus \widetilde{G}_2$ denote the
    graph obtained by taking the disjoint union of~$G_1$ and~$G_2$ and identifying
    each vertex in~$\bound(G_1)$ with the vertex in~$\bound(G_2)$ with the same label.
    The resulting order of vertices is an arbitrary extension of the
    orderings on $V(G_1)$ and $V(G_2)\setminus V(G_1)$.
    This operation is called \emph{gluing}.

    Let~$H \subseteq G$ and let~$B$ be a labeled vertex set
    consisting of~$\bound(H)$ with unique labels $\{1,\dots,t\}$.
    The operation of \emph{ungluing} $H$ from~$G$ creates the $t$-boundaried graph
    $G \ominus_{B} H := \big(G -(V(H) \setminus B), B \big)$.
\end{definition}
\noindent
The gluing operation entails taking the union of edges both of whose 
endpoints are in the boundary, with implicit deletion of multiple edges to keep the graph simple. 
The ungluing operation preserves the boundary (both the vertices and the
edges). For the sake of clarity, we sometimes annotate the $\oplus$ operator
with the boundary as well.

Note that an $r$-protrusion $W$ of a graph $G$ implicitly defines
a $t$-boundaried graph $\widetilde{G}[W]:=\big(G[W],\partial_G(W)\big)$, $\,t=|\partial_G(W)|\leq r$,
where the boundary vertices are assigned labels from $\{1, \ldots, t\}$
according to their order in $G$.
Hence we can rigorously deal with protrusions in $G$ as with $t$-boundaried
subgraphs of~$G$ as, e.g., in the following definition.
\begin{definition}[Replacement]
    Let~$W$ be an $r$-protrusion of a graph $G$ defining the $t$-boundaried
    graph $\widetilde{G}[W]$, and let $B$ be the labeled set of the boundary $\partial_G(W)$.
    For a $t$-boundaried graph~$\widetilde{H}$, 
    \emph{replacing}~$\widetilde{G}[W]$ by~$\widetilde{H}$ in $G$ is defined as the operation
    $
        (G \ominus_B G[W]) \oplus_B \widetilde{H}
    $.
\end{definition}

\noindent
The following definition concerns the centerpiece of our framework. 

\begin{definition}[Finite integer index; FII] \label{def:finiteii}
    Let~$\Pi$ be a graph problem and let $\widetilde{G}_1 = (G_1, \bound(G_1))$, 
    $\widetilde{G}_2 = (G_2, \bound(G_2))$ be two $t$-boundaried graphs.
    We say that $\widetilde{G}_1 \equiv_{\Pi, t} \widetilde{G}_2$ 
    if there exists an integer constant 
    $\Delta_{\Pi, t}(\widetilde{G}_1,\widetilde{G}_2)$ such that for all $t$-boundaried graphs 
    $\widetilde{H} = (H, \bound(H))$ and for all~$\probpar \in \N$:
    $$
        \big(\widetilde{G}_1 \oplus \widetilde{H}, \probpar\big) \in \Pi ~\text{iff}~
          \big(\widetilde{G}_2 \oplus \widetilde{H}, \probpar + \Delta_{\Pi, t}(\widetilde{G}_1,\widetilde{G}_2)\big) \in \Pi.
    $$
    We say that~$\Pi$ has \emph{finite integer index in the class 
    $\cal F$} if, for every~$t \in \N$, the relation $\equiv_{\Pi, t}$ has finite index
    if restricted to $\cal F$.
\end{definition}
\noindent
Note that the constant $\Delta_{\Pi, t}(\widetilde{G}_1,\widetilde{G}_2)$
depends on  $\Pi$,  $t$, and the \emph{ordered pair}  $(\widetilde{G}_1,
\widetilde{G}_2)$ so that  $\Delta_{\Pi, t}(\widetilde{G}_1,\widetilde{G}_2) =
-\Delta_{\Pi, t}(\widetilde{G}_2,\widetilde{G}_1)$. On most occasions, the
problem $\Pi$ and the class $\cal F$ will be clear from the context and in
such situations, we use $\equiv_{t}$ and $\Delta_{t}$ instead of $\equiv_{\Pi,
t}$ and  $\Delta_{\Pi, t}$, respectively.

If a graph problem has finite integer index then its instances can be
reduced by ``replacing protrusions''. The technique of replacing protrusions
hinges on the fact that each protrusion of ``large'' size can be replaced by a
``small'' gadget from the same equivalence class as the protrusion, which
consequently behaves similarly \wrt the problem at hand. If~$\widetilde{G}_1$ is replaced
by a gadget~$\widetilde{G}_2$, then~$\probpar$ changes 
by~$\Delta_{\Pi,t}(\widetilde{G}_1,\widetilde{G}_2)$. 
Many problems have finite integer index in general graphs including \name{Vertex Cover},
\name{Independent Set}, \name{Feedback Vertex Set}, \name{Dominating Set}, 
\name{Connected Dominating Set}, \name{Edge Dominating Set}. For a more complete list 
see~\cite{BFLPST09,FLST10}. Some problems that do not have finite integer index in general 
graphs are \name{Connected Feedback Vertex Set}, \name{Longest Path} and \name{Longest Cycle}.

For a graph class $\cal F$, let $\cal F_t$ denote the class of all $t$-boundaried 
graphs made of the members of $\cal F$.
The next lemma shows that if we assume that a graph problem $\Pi$ 
has FII in a graph class $\cal F$, then we can choose finitely many representatives
for the equivalence classes of $\equiv_{\Pi, t}$ from a (possibly different) 
graph class $\cal G$ under certain circumstances. 

\begin{lemma}\label{lemma:ProtInducedSubgraph}
    Let $\cal F$ be a graph class and $\Pi$ a graph problem such that $\Pi$ 
    has FII in $\cal F$.
    Let $\cal G$ be a class of graphs with vertex labels from
    $\{1,\dots,t\}$, and $\preceq$ be a relation on $\cal G$ such that
    $\cal G$ is well quasi-ordered by~$\preceq$.
    Then, for each $t \in \N$, there exists a finite set 
    $\cal R(t, \cal F, \cal G, \preceq) \subseteq \cal F_t \cap \cal G$
    with the following property.
    For every $\widetilde{G} = (G, \bound(G)) \in \cal F_t \cap \cal G$ 
    there exists $\widetilde{G}_0 = (G_0, \bound(G_0)) \in \cal R(t, \cal F, \cal G, \preceq)$ 
    such that it holds; $\widetilde{G} \equiv_{\Pi, t} \widetilde{G}_0$,\, 
    $\bound(G)\!$ and $\bound(G_0)$ are identical, and $G_0 \preceq G$.
\end{lemma}
\begin{proof}
    Let $\cal E_1, \ldots, \cal E_q$ be the equivalence classes of the relation 
    $\equiv_{\Pi,t}$ on $\cal F_t$, where~$q$ is some constant. 
    For each equivalence class $\cal E_i$, define $\cal E_{i}' =\cal E_i \cap \cal G$. 
    Next, partition $\cal E_{i}'$ into at most $2^{t^2} \cdot t!$ 
    sets $\cal E_{i,j}'$ such that all graphs in $\cal E_{i,j}'$ have identical boundaries.  
    Since $\cal G$ is well quasi-ordered by~$\preceq$, there is a
    \emph{finite} set $\cal G_{i,j}\subseteq \cal E_{i,j}'$ of
    the $\preceq$-minimal elements, for every $i,j$ as above.
    In other words, for all $\widetilde{G} \in \cal E_{i,j}'$ there exist
    $\widetilde{G}_0 \in \cal G_{i,j}$ satisfying the three properties stated in the lemma. 
    Consequently, $\bigcup_{j} \cal G_{i,j}$ can be chosen as the representatives for
    each $\cal E_i$.
    Altogether, define $\cal R(t, \cal F, \cal G, \preceq) = \bigcup_{i,j} \cal G_{i,j}$.
    Since $\cal R(t, \cal F, \cal G, \preceq)$ is the finite union of finite sets, it is finite.
\end{proof}
 
\noindent
Let us explain how we use Lemma~\ref{lemma:ProtInducedSubgraph}. The graph
problems~$\Pi$ that we consider in this paper usually have FII  on the class
of general graphs or, for all $p \in \N$,   in the class of graphs of
treedepth at most~$p$. In accordance with the  notation in
Lemma~\ref{lemma:ProtInducedSubgraph}, the class $\cal F$ corresponds to the
class where $\Pi$ has FII. The choice of our parameter now ensures that our
kernelization replaces protrusions of treedepth at most a previously fixed
constant~$d$: choosing $\cal G$ to be the graphs of treepdepth at most $d$,
all protrusions (actually the graphs induced by them)  are members of $\cal F
\cap \cal G$. As $\cal G$ is well-quasi ordered under the label-preserving induced subgraph
relation~\cite[Chapter~6, Lemma~6.13]{NOdM12},  we choose $\preceq$ to be
$\subseteq_{\text{ind}}$. 

Now consider a restriction of the graph problem~$\Pi$ to a class $\cal K$ that 
is closed under taking induced subgraphs. In this paper, the class $\cal K$ 
is a hereditary graph class of bounded expansion or 
a hereditary and nowhere dense class.  
This ensures that
$\emptyset \neq \cal K \cap \cal G \subseteq \cal F \cap \cal G$. 
Given an instance $(G, \probpar)$ of~$\Pi$ with $G \in \cal K$, 
one can replace a protrusion of $G$ by a representative (of constant size) 
that is an \emph{induced subgraph} of that protrusion, ensuring that
this replacement creates a graph that still resides in $\cal K$. To summarize,
Lemma~\ref{lemma:ProtInducedSubgraph} guarantees that the protrusion 
replacement rule (described next) preserves the graph class $\cal K$ and
the parameter.

As preparation for the kernelization theorems of the next section,
let $\PPP$ denote the set of all graph problems that have FII
on general graphs or, for each~$p \in \N$, in the class of graphs of treedepth at most~$p$. 
Our reduction rule is formalized as follows.
\begin{redrule}[Protrusion replacement]\label{rrule:Protrusion}
    Let $t,d \in \N$ and let \mbox{$\Pi \in \PPP$}. 
    Let $\Rep$ be a class of boundaried graphs of treedepth at most
    $d$ containing representatives of the equivalence classes of
    $\equiv_{\Pi,i}$ restricted to the graphs of treedepth at most~$d$, for $i=1,\dots,t$.
    Let $(G,\probpar)$ be an instance of $\Pi$ and assume that 
    $W \subseteq V(G)$ is a $t$-protrusion of treedepth at most~$d$ and
    boundary size $i=|\partial_G(W)|\leq t$.
    Let $\widetilde{R} \in \Rep$ be a $\equiv_{\Pi, i}$-representative of~$\widetilde{G}[W]$. 
    The protrusion replacement rule is the following:
$$
    \mbox{Reduce $(G,\probpar)$ to~ $(G',\probpar') := \big((G \ominus G[W]) 
    \oplus \widetilde{R},\> \probpar + \Delta_{\Pi,i}(\widetilde{G}[W],\widetilde{R})\big)$.}
$$
\end{redrule}

\noindent
We let $\cal F$ denote the class on which the problem has FII
and by $\cal G$ the class of graphs of treedepth at most~$d$.
The existence of a suitable {\em finite set of representatives} $\cal R(t,d)$
for Rule~\ref{rrule:Protrusion} is guaranteed by Lemma~\ref{lemma:ProtInducedSubgraph}:
we let $\Rep$ denote the finite set
$\bigcup_{i=1}^t \cal R(i, \cal F, \cal G, \subseteq_{\text{ind}})$ from
Lemma~\ref{lemma:ProtInducedSubgraph},
and $\prot$ denote the size of the largest member of $\Rep$.
The safety of the protrusion replacement
follows from the definition of FII. 

In what follows, when applying the protrusion replacement by
Rule~\ref{rrule:Protrusion}, we will always assume that for each $t,d \in \N$, 
we are given the finite set $\Rep$ of representatives.
Note that previous work on meta-kernels implicitly made this 
assumption~\cite{BFLPST09,FLST10,FLMPS11,FLMS12}.

\section{Linear Kernels on Graphs of Bounded Expansion}\label{sec:Kernels}

In this section we show that graph problems that have finite integer index on
general graphs or in the class of graphs with bounded treedepth  admit linear
kernels on hereditary graph classes with bounded expansion, when parameterized
by the size of a modulator to constant treedepth. On nowhere dense classes,
we obtain an almost-linear kernel. Our main theorem is the
following.
\begin{theorem}\label{theorem:KernelBoundedExp}
        Let $\cal K$ be a graph class of bounded expansion and let~$d \in \N$ be a constant.
        Let~$\Pi \in \PPP$. Then there is an algorithm that takes as input 
	$(G,\probpar) \in \cal K \times \N$ 
        and, in time $\bigO(|G| + \log \xi)$, outputs $(G',\probpar')$ such that
        \begin{enumerate}
                \item $(G,\probpar) \in \Pi$ if and only if $(G',\probpar') \in \Pi$;
                \item $G'$ is an induced subgraph of $G$; and
                \item $|G'| = \bigO(|S|)$, where~$S$ is an optimal treedepth-$d$
                        modulator of the graph~$G$.
        \end{enumerate}
\end{theorem}

\noindent
In the rest of this section we fix a problem $\Pi \in \frak{P}$ and let $\cal K$ be a 
hereditary graph class whose expansion is bounded by a function $f \colon \N
\rightarrow \R$.

We proceed as follows. Because an optimal treedepth-$d$ modulator cannot be assumed as 
part of the input, we obtain an approximate modulator $S
\subseteq V(G)$ to partition~$V(G)$ into sets $\YYYY$ such
that $S \subseteq Y_0$ and $|Y_0| = \bigO(|S|)$ and for $1 \leq i \leq l$,
$Y_i$ induces a collection of connected components of $G - Y_0$ that have exactly the
same \emph{small} neighborhood in~$Y_0$. We then use bounded expansion to 
show that $\ell = \bigO(|S|)$ and use protrusion reduction 
to replace each~$G[Y_i]$, $1 \leq i \leq l$, by an \emph{induced subgraph} of $G[Y_i]$ of 
constant size. Every time the protrusion replacement rule is applied, $\probpar$ is
modified. This results in an equivalent instance $(G',\probpar')$ such that
$G' \subseteq G$ and $|G'| = \bigO(|S|)$, as claimed in Theorem~\ref{theorem:KernelBoundedExp}.

\begin{lemma}\label{lemma:TDModApprox}
  Fix~$d \in \N$. Given a graph~$G$, one can in $O(|G|^2)$ time compute
  a subset $S \subseteq V(G)$ such that $\td(G-S) \leq d$ and $|S|$ is
  at most $2^d$ times the size of an optimal treedepth-$d$ modulator
  of~$G$. For graphs of bounded expansion, the set~$S$ can be computed in
  linear time. For nowhere dense classes it can be computed in time~$O(|G|^{1+\varepsilon})$
  for every fixed~$\varepsilon > 0$.
\end{lemma}
\begin{proof}
	We use the fact that any DFS-tree of a graph of treedepth~$d$ has
	height at most~$2^d-1$. We compute a DFS-tree 
	of the graph~$G$ and if it has height more than $2^d-1$, then $\td(G)>d$.
        So, we take some path $P$ from the root of the tree of length $2^d-1$ and add all the
	$2^d$ vertices of $P$ into a set $S_0\subseteq S$; delete $V(P)$ from the graph and repeat.
	(Clearly, at least one of the vertices of $P$ must be in any modulator.) At the end 
	of this procedure, the DFS-tree of the remaining graph $G-S_0$ has height at most $2^d - 1$. 
	This gives us a tree (path) decomposition of the graph of width at most~$2^d - 2$. 
	Now use standard tools (e.g., Courcelle's theorem \cite{Cou90}) to obtain an optimal treedepth-$d$ modulator
	$S_1$ in $G-S_0$, and set $S=S_0\cup S_1$. Since the 
	treewidth of $G-S_0$ is a constant, the latter algorithm runs in 
	time linear in the size of the graph. The overall size of the modulator
	is at most $2^d$ times the optimal solution.  

  For a graph $G$ from a class of bounded expansion, we modify the way $S_0$
  is computed above (the resulting set will not be larger than the one
  computed above, and often much smaller). By~\cite{NOdM06}, graph classes of
  bounded expansion admit low treedepth coloring: Given any integer~$p$, there
  exists an integer~$n_p$ such that any graph of the class can be properly
  vertex colored using~$n_p$ colors such that for any set of $1 \leq i \leq p$
  colors, the graph induced by the vertices that receive these $i$ colors has
  treedepth at most~$i$. Such a coloring is called a $p$-treedepth coloring
  and can be computed in linear time \cite{NOdM06}. 
        Here we choose $p = 2^d$ and obtain such a coloring for~$G$ using~$n_p$ colors. 
	Let~$G_1, \ldots, G_r$ denote the subgraphs induced by at most~$2^d$
	of these color classes where~$r < 2^{n_p}=\bigO(1)$.  
	Note that $\sum_j |G_j| = \bigO(|G|)$, since every vertex of~$G$ appears in at most a constant number of subgraphs. 
	Any path in~$G$ of length $2^d-1$ must be in some subgraph~$G_j$, for $1 \leq j \leq
	r$, and we hit all such paths with a set $S_0$ obtained in the
	following iterated procedure.

	Start with $S_0=\emptyset$. For each $j=1,2,\dots,r$,
	we simply construct a treedepth decomposition of $G_j-S_0$, e.g., by the depth-first search.
	Using standard dynamic programming we find an optimum hitting set
	for the set of all paths of length~$2^d-1$ in $G_j-S_0$ and add its vertices into $S_0$ (and delete them from the graph). 
	Again, some hitting set for these paths must be in any modulator.
	The time taken to do this for each subgraph~$G_j-S_0$ is $\bigO(|G_j|)$. 
	The total time taken is therefore $\sum_j |G_j| = \bigO(|G|)$.

  The approach for nowhere dense classes is nearly the same: by~\cite{NOdM08,NOdM12},
  for a nowhere dense class~$\cal G$ and $\varepsilon' > 0$, $p \in \N$ there
  exists a threshold~$N_{\varepsilon',p}$ such that for all~$G \in \cal G$ with
  $|G| \geq N_{\varepsilon',p}$ it holds that~$G$ has a $p$-treedepth coloring
  with at most~$|G|^{\varepsilon'}$ colors. Therefore, for every $\varepsilon
  > 0$, the above algorithm runs in time~$O(|G|^{1+\varepsilon})$ by
  choosing~$\varepsilon' = \varepsilon/p$ and~$p = 2^d$; now the
  subgraphs~$G_1,\ldots, G_r$ induced by at most~$2^d$ colors have again
  treedepth at most~$2^d$ while~$r \leq (|G|^{\varepsilon'})^p =
  |G|^{\varepsilon}$. The running time to construct~$S_0$ is, accordingly,
  $\sum_j |G_j| = \bigO( |G|^{1+\varepsilon})$ and this also bounds the total
  running time.
\end{proof}

\noindent
We will make heavy use of the following lemma to prove the kernel size.

\begin{lemma}\label{lemma:BipartiteNab1}
	Let $G=(X,Y,E)$ be a bipartite graph, and $p\geq\grad_1(G)$. Then there are at most  
	\begin{enumerate}
		\item $2p \cdot |X|$ vertices in~$Y$ with degree greater than $2p$;
                \item $(4^{p} + 2p) \cdot |X|$ subsets~$X' \subseteq X$ such that 
			$X' = N(u)$ for some $u \in Y$.
	\end{enumerate}
\end{lemma}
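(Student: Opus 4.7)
Let $Y_H := \{u \in Y : \deg(u) > 2\grad_1(G)\}$ and consider the induced subgraph $G[X \cup Y_H]$. Since this is a subgraph of $G$, it lies in $G\nab 0$, so its average degree is at most $2\grad_0(G) \le 2\grad_1(G)$; equivalently, $|E(G[X\cup Y_H])| \le \grad_1(G)\,(|X|+|Y_H|)$. On the other hand, because $Y_H \subseteq Y$ is independent in the bipartite graph, every edge of $G[X\cup Y_H]$ is incident to a vertex of $Y_H$, which yields $|E(G[X\cup Y_H])| = \sum_{u\in Y_H}\deg_G(u) > 2\grad_1(G)\,|Y_H|$. Combining the two inequalities (the case $\grad_1(G)=0$ being trivial, as then $G$ is edgeless and $Y_H=\emptyset$) gives $|Y_H| < |X|$. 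Whenever $G$ has at least one edge we have $\grad_1(G) \ge \grad_0(G) \ge 1/2$, hence $|Y_H| < |X| \le 2\grad_1(G)\,|X|$ as required.

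\textbf{Part 2.} Set $c := \grad_1(G)$ and $d := 2c$, and split the family $\mathcal{F} = \{N(u) : u \in Y\}$ of distinct neighborhoods into the high-degree part $\mathcal{F}_H = \{F \in \mathcal{F} : |F| > d\}$ and the low-degree part $\mathcal{F}_L = \mathcal{F}\setminus \mathcal{F}_H$. Every member of $\mathcal{F}_H$ is realized by some vertex of $Y_H$, so $|\mathcal{F}_H| \le |Y_H| \le 2c\,|X|$ by Part~1; this yields the summand $2\grad_1(G)\,|X|$ in the claim.

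For $\mathcal{F}_L$, pick a representative $u_F \in Y$ for each $F \in \mathcal{F}_L$ and set $Y_0 = \{u_F : F \in \mathcal{F}_L\}$ together with $G_0 = G[X\cup Y_0]$. The plan is to construct a depth-$1$ shallow minor $H \in G\nab 1$ on vertex set $X$ in which every $F \in \mathcal{F}_L$ appears as a clique. Once this is done, $\grad_0(H) \le \grad_1(G) = c$ so $H$ is $2c$-degenerate, and Proposition~\ref{prop:DegenerateNumCliques} bounds the number of cliques in $H$ by $2^{2c}(|X| - 2c + 1) \le 4^{\grad_1(G)}\,|X|$, giving the second summand.

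\textbf{Main obstacle.} Constructing $H$ is the delicate step. A first attempt is to pick, for each $u_F \in Y_0$, a neighbor $\tau(u_F) \in F$ and contract the pairwise vertex-disjoint stars $\{x\} \cup \tau^{-1}(x)$ (indexed by $x \in X$); this is a valid depth-$1$ operation and already produces the ``center-to-leaf'' edges of each $F$ in $H$, but \emph{not} the cross-edges among the elements of $F \setminus \{\tau(u_F)\}$. Those missing cross-edges must be supplied by the contractions belonging to other representatives, and making this happen simultaneously for every $F \in \mathcal{F}_L$ is the technical core of the argument. I plan to select $\tau$ via a degeneracy ordering of $G_0$ (which exists because $G_0$ is $d$-degenerate) and to exploit the $d$-bound on backward neighborhoods to show that after contraction every pair inside every low-degree $F$ is joined by an edge of $H$; the analysis naturally splits according to whether the representative $u_F$ precedes or succeeds $\max_\sigma F$ in the ordering, the two cases being handled respectively by the star of $u_F$ itself and by the star of whichever other representative is responsible for the late element.
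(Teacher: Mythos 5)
Your Part~1 is correct and is a genuinely different, more elementary argument than the paper's: a direct double count of the edges of $G[X\cup Y_H]$ against the bound $|E|\leq\grad_0(G)(|X|+|Y_H|)$ gives $|Y_H|<|X|$, which (using $\grad_1(G)\geq 1/2$ whenever $G$ has an edge) is even stronger than the stated bound. The paper instead obtains Part~1 as a by-product of a contraction process, charging each deleted high-degree vertex of $Y$ to a new edge created inside $X$.

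Part~2, however, has a genuine gap, and it is not merely that the construction of $H$ is left as a ``plan.'' The plan itself cannot be carried out: there is in general \emph{no} depth-$1$ shallow minor $H$ of $G$ on vertex set $X$ in which every low-degree neighborhood induces a clique. Concretely, let $G$ contain a disjoint copy of $K_{3,3}$ (so that $\grad_1(G)\geq\grad_0(G)\geq 3/2$, hence $2\grad_1(G)\geq 3$) together with a vertex $u\in Y$ whose only neighbors are three fresh vertices $a,b,c\in X$ of degree one. Then $F=N(u)=\{a,b,c\}$ lies in your $\mathcal F_L$, but $\{a,b,c\}$ cannot form a triangle in \emph{any} minor of $G$: each of $a,b,c$ has $u$ as its unique neighbor, so at least two of their branch sets are singletons, and those two singletons are non-adjacent in $G$. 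No choice of $\tau$ or degeneracy ordering can fix this, because the obstruction is that a single $Y$-vertex can be assigned to only one branch set while it is the sole witness for all three pairs in $F$. The way out is the paper's accounting, which does not split neighborhoods by degree but by whether their $Y$-vertex survives a sequential contraction process: repeatedly contract a $Y$-vertex having two non-adjacent neighbors in $X$ into one of them. Each contraction creates at least one new edge inside $X$, so at most $2\grad_1(G)\cdot|X|$ vertices of $Y$ are ever contracted (this one term absorbs both the high-degree vertices and awkward low-degree vertices such as $u$ above), while every surviving $Y$-vertex has a neighborhood that is a clique in the final $2\grad_1(G)$-degenerate graph on $X$, to which the clique-counting bound then applies.
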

\begin{proof}
	We construct a sequence of graphs $G_0,G_1,\dots,G_\ell$ such that 
	$G_i \in G \nab 1$ for all $0 \leq i \leq \ell$ as follows. 
	Set $G_0 = G$, and for $0 \leq i \leq \ell-1$ construct $G_{i+1}$ 
	from $G_i$ by choosing a vertex $v \in V(G_i) \setminus X$ such 
	that $N(v) \subseteq X$ contains two non-adjacent vertices $u,w$ in $G_i$; 
	and contract this edge to the vertex~$u$ to obtain $G_{i+1}$. Recall that 
	contracting $uv$ to $u$ is equivalent to deleting vertex $v$ and adding 
	edges between each vertex in $N(v) \setminus u$ and $u$. It is clear 
	from the construction that for $0 \leq i \leq \ell$, 
	$X \subseteq V(G_i) \subseteq X \cup Y$.

	This process clearly terminates, as $G_{i+1}$ has at least one more 
	edge between vertices of $X$ than $G_i$. Note that $G_i \in G \nab 1$ 
	for $0 \leq i \leq \ell$, as the edges $e_1,\dots,e_{i-1}$ that were 
	contracted to vertices in $X$ in order to construct $G_i$ had one 
	endpoint each in $X$ and $Y$, the endpoint in $Y$ being deleted after 
	each contraction. Thus, $e_1, \dots, e_{i-1}$ induce a set of stars
	in $V(G) = V(G_0)$, and $G_i$ is obtained from $G$ by contracting these stars. 
	We therefore conclude that $G_i$ is a depth-one shallow minor of $G$. 
	In particular, this implies $G_\ell[X]$ is $2p$-degenerate and 
	has at most $2p|X|$ edges.
	Further, note that for each $0 \leq i \leq \ell$, $Y \cap V(G_i)$ is, 
	by construction, still an independent set in~$G_i$.
	
	Let us now prove the first claim. To this end, assume that there is
	a vertex $v \in Y\cap V(G_\ell)$ such that $degree(v) > 2p$. 
	We claim that $G_\ell[N(v)]$ (where  $N(v) \subseteq X$) is a clique. 
	If not, we could choose a pair of non-adjacent vertices in 
	$G_\ell[N(v)]$ and construct a ($\ell+1$)-th graph
	for the sequence which would contradict the fact that $G_{\ell}$ 
	is the last graph of the sequence. 
	However, a clique of size $|\{v\}\cup N(v)| > 2p+1$ 
	is not $2p$-degenerate.
	Hence we conclude that no vertex of $Y \cap V(G_{\ell})$ has 
	degree larger than $2p$ in $G_\ell$ 
	(and in $G$). Therefore the vertices of $Y$ of degree greater 
	than $2p$ in the graph~$G$, if there were any, 
	must have been deleted during the edge contractions that resulted 
	in the graph $G_{\ell}$. As every contraction added at least 
	one edge between vertices in $X$ and since $G_\ell[X]$ contains at most 
	$2p|X|$ edges, the first claim follows.

	For the second claim, consider the set $Y' = Y \cap V(G_\ell)$. 
	The neighborhood of every vertex $v \in Y'$ induces a clique in 
	$G_\ell[X]$. From the $2p$-degeneracy of $G_\ell[X]$ and
	Proposition~\ref{prop:DegenerateNumCliques}, it follows
	that $G_\ell[X]$ has at most 
	$2^{2p}|G_\ell[X]|=4^{p} \cdot |X|$ cliques. Thus the 
	number of subsets of $X$ that are neighborhoods of vertices in $Y$ in $G$ is
	at most $(4^p+2p) \cdot |X|$, where we accounted 
	for vertices of $Y$ lost via contractions by the bound on the number 
	of edges in $G_\ell[X]$.
\end{proof}

\noindent
The following two corollaries to Lemma~\ref{lemma:BipartiteNab1} show how it
can be applied in our situation.

\begin{corollary}\label{cor:LargeDegree}
	Let $\cal K$ be a graph class whose expansion is bounded by a 
	function $f \colon \N \rightarrow \R$. Suppose that for 
	$G \in \cal K$ and $S \subseteq V(G)$, $C_1, \ldots, C_s$ are 
	disjoint connected subgraphs of $G-S$ satisfying the following 
	two conditions
    \begin{enumerate}\item
	for $1 \leq i \leq s$, $\diam(G[V(C_i)]) \leq \delta$ and
	\item $|N_S(C_i)| > 2 \cdot f(\delta+1)$.
    \end{enumerate}
	Then $s \leq 2 \cdot f(\delta+1) \cdot |S|$.
\end{corollary}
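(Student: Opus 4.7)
The plan is to reduce the claim to Lemma~\ref{lemma:BipartiteNab1}(1) applied to a suitable bipartite graph $H$ built from $G$, $S$, and the $C_i$'s. Specifically, I would let $H$ be the bipartite graph with parts $X := S$ and $Y := \{v_1, \ldots, v_s\}$, where $v_i$ is a new vertex representing $C_i$, and where we put an edge $v_i u$ whenever $u \in N_S(C_i)$. Note that by construction, $\deg_H(v_i) = |N_S(C_i)|$, so our hypothesis says that \emph{every} vertex of $Y$ has degree strictly greater than $2 f(\delta+1)$ in $H$.

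The key step is to show that $\grad_1(H) \le f(\delta+1)$. I would do this by exhibiting, for any depth-$1$ shallow minor $H'$ of $H$, a corresponding depth-$(\delta+1)$ shallow minor of $G$ on the same vertex set and with the same edges. Recall that $H'$ is obtained by contracting in a subgraph of $H$ a disjoint family of radius-$1$ balls; each such ball is either $\{v_i\} \cup N_H(v_i) \subseteq \{v_i\} \cup S$ centered at some $v_i \in Y$, or $\{x\} \cup N_H(x) \subseteq \{x\} \cup Y$ centered at some $x \in X$. ``Un-contracting'' each $v_j$ back to $V(C_j)$ turns such a ball into the set $V(C_i) \cup N_S(C_i)$ (for a $Y$-centered ball) or $\{x\} \cup \bigcup_{j : x \in N_S(C_j)} V(C_j)$ (for an $X$-centered ball). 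Because each $C_j$ has diameter at most $\delta$ in $G$, so in particular radius at most $\delta$ with some center $c_j \in V(C_j)$, these un-contracted sets have radius at most $\delta+1$ in $G$ (witnessed by $c_i$ in the first case, and by $x$ in the second case, using that $x$ is adjacent to some vertex of $C_j$ and that $C_j$ has diameter $\le \delta$). Hence $H \nab 1 \subseteq G \nab (\delta+1)$, which gives $\grad_1(H) \le \grad_{\delta+1}(G) \le f(\delta+1)$.

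With this inequality in hand, the conclusion is immediate: Lemma~\ref{lemma:BipartiteNab1}(1), applied to the bipartite graph $H$, bounds the number of $v_i \in Y$ of degree greater than $2\grad_1(H)$ by $2\grad_1(H) \cdot |X| = 2\grad_1(H) \cdot |S|$. Since every $v_i$ satisfies $\deg_H(v_i) > 2 f(\delta+1) \ge 2\grad_1(H)$, all $s$ vertices of $Y$ are counted, so $s \le 2\grad_1(H) \cdot |S| \le 2 f(\delta+1) \cdot |S|$.

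The main obstacle is the shallow-minor bookkeeping in the middle step: one has to be careful that the contractions performed to build a depth-$1$ minor of $H$ really do lift to disjoint radius-$(\delta+1)$ branch sets in $G$, which uses both the disjointness of the $C_i$'s and the fact that the branch sets in a depth-$1$ minor of $H$ are themselves vertex-disjoint. Once that translation is secured, the rest is a direct invocation of the expansion bound on $\cal G$ and of Lemma~\ref{lemma:BipartiteNab1}.
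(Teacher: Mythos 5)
Your proposal is correct and follows essentially the same route as the paper: both build the auxiliary bipartite graph on $S$ and the contracted components $C_i$, observe that it is a shallow minor of $G$ at depth $\delta$ (so its depth-$1$ minors live in $G \nab(\delta+1)$, giving $\grad_1 \le f(\delta+1)$), and then invoke Lemma~\ref{lemma:BipartiteNab1}(1). The only difference is that you verify the shallow-minor composition by hand, where the paper simply writes $\grad_1(\tilde G) \le \grad_1(G\nab\delta) = \grad_{\delta+1}(G)$.
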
 
\begin{proof}
	We construct an auxilliary  bipartite graph $\bar G$ with partite 
	sets $S$ and $Y = \{C_1,\ldots,C_s\}$. There is an edge between $C_i$ 
	and $x \in S$ iff $x \in N_S(C_i)$. Note that $\bar G$ is a 
	depth-$\delta$ shallow minor of $G$ with branch sets $C_i, 1 \leq i \leq s$.
	In relation to Lemma~\ref{lemma:BipartiteNab1} we would like to show
	that, for any $F\in\bar G\nab1$, it is $F\in G\nab(\delta+1)$
	(while $\nab$ is not additive in general).
	This follows since a branch set of $F$ in $G$ is induced by a vertex of $S$
	plus a subcollection of attached sets $C_i$, $1 \leq i \leq s$, or by one set
	$C_i$ and a subset of attached vertices from~$S$.
	In both the cases the radius is at most
	$1+\max_{i}\diam(C_i)\leq\delta+1$.
	
	Consequently,
	$\grad_1(\bar G)\leq\grad_{\delta+1}(G)\leq f(\delta+1)$
	and, by Lemma~\ref{lemma:BipartiteNab1} for the choice $p=f(\delta+1)$,
	$$s\leq 2p|S|=2f(\delta+1)\cdot|S| .\vspace*{-3ex}$$
\end{proof}

\begin{corollary}\label{cor:NumClusters}
	Let $\cal K$ be a graph class whose expansion is bounded by a 
	function $f \colon \N \rightarrow \R$. Suppose that for 
	$G \in \cal K$ and $S \subseteq V(G)$, $\mathcal C_1, \ldots, \mathcal C_t$ 
	are sets of connected components of $G-S$ such that for all pairs
	$C,C' \in \bigcup_i \mathcal{C}_i$ it holds that $C, C' \in \mathcal C_j$ 
	for some $j$ if and only if $N_S(C) = N_S(C')$. Let $\delta > 0$ 
	be a bound on the diameter of the components, \ie 
	for all $C \in \bigcup_i \mathcal{C}_i$, $\diam(G[V(C)]) \leq \delta$.
	Then there can be only at most 
	$t\leq(4^{f(\delta+1)} + 2f(\delta+1)) \cdot |S|$ such sets~$\mathcal{C}_i$.
\end{corollary}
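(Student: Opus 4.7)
The plan is to mirror the proof of Corollary~\ref{cor:LargeDegree}, but this time invoke the second part of Lemma~\ref{lemma:BipartiteNab1} (which counts distinct neighborhoods) rather than the first (which counts high-degree vertices). Since the sets $\mathcal{C}_i$ partition the components of $G-S$ precisely according to their neighborhood into $S$, each $\mathcal{C}_i$ corresponds bijectively to a distinct subset $N_i \subseteq S$, and bounding $t$ amounts to bounding the number of such distinct subsets.

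First I would pick, for each $i \in \{1,\dots,t\}$, a single representative component $C_i \in \mathcal{C}_i$. Then I would build the auxiliary bipartite graph $\tilde G$ with partite sets $S$ and $Y = \{C_1,\ldots,C_t\}$, placing an edge between $C_i$ and $x \in S$ exactly when $x \in N_S(C_i)$. By contracting each $C_i$ to a single vertex, $\tilde G$ is realized as a depth-$\delta$ shallow minor of $G$ (radius at most diameter at most $\delta$), so $\tilde G \in G \nab \delta$. Exactly as in Corollary~\ref{cor:LargeDegree}, this yields
\[
\grad_1(\tilde G) \leq \grad_1(G \nab \delta) = \grad_{\delta+1}(G) \leq f(\delta+1).
\]

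Next I would apply Lemma~\ref{lemma:BipartiteNab1}(2) to $\tilde G$ with $X := S$. Because different $\mathcal{C}_i$ have, by hypothesis, different neighborhoods $N_S(C_i)$ in $S$, the representatives $C_1,\ldots,C_t$ contribute $t$ pairwise distinct subsets of $S$ of the form $N_{\tilde G}(C_i)$. Hence
\[
t \;\leq\; \bigl(4^{\grad_1(\tilde G)} + 2\grad_1(\tilde G)\bigr)\cdot |S| \;\leq\; \bigl(4^{f(\delta+1)} + 2f(\delta+1)\bigr)\cdot |S|,
\]
which is exactly the claimed bound.

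There is no real obstacle here beyond bookkeeping: the proof is a direct reduction to Lemma~\ref{lemma:BipartiteNab1}(2) along the exact template set up by Corollary~\ref{cor:LargeDegree}. The only two points requiring a brief justification are (i) that taking one representative per class suffices, because Lemma~\ref{lemma:BipartiteNab1}(2) counts \emph{distinct} neighborhoods and we only need $t$ many, and (ii) that bounded diameter of the components is what lets us legitimately view $\tilde G$ as a depth-$\delta$ shallow minor so the expansion bound on $\cal G$ can be applied.
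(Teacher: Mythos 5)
Your proof is correct and follows essentially the same route as the paper: both build the auxiliary bipartite graph between $S$ and the components, realize it as a depth-$\delta$ shallow minor using the diameter bound, and invoke the second part of Lemma~\ref{lemma:BipartiteNab1}. The only cosmetic difference is that you put one representative per class into $Y$ while the paper puts all components in and counts distinct neighborhoods; both give the identical bound.
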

\begin{proof}
	As in the proof of Corollary~\ref{cor:LargeDegree}, we construct 
	a bipartite graph $\bar G$ with partite sets $S$ and 
	$Y = \{C_1, \ldots, C_r\}$, and argue about $\grad_1(\bar
	 G)\leq\grad_{\delta+1}(G)\leq f(\delta+1)$.
 	By Lemma~\ref{lemma:BipartiteNab1}, for $p=f(\delta+1)$,
	$$t\leq (4^{p}+2p)|S|=
	 (4^{f(\delta+1)}+2f(\delta +1)) \cdot |S| .\vspace*{-3ex}$$
\end{proof}

\noindent
In the first phase, our kernelization algorithm partitions an input graph
according to a low-treedepth modulator (as found in
Lemma~\ref{lemma:TDModApprox}).

\begin{lemma}\label{lemma:Decomposition}
	Let~$\mathcal{K}$ be a graph class with expansion bounded by~$f$,
	$G \in \mathcal{K}$ and $S \subseteq V(G)$ be a treedepth-$d$ modulator	($d$ a constant).
	There is an algorithm that runs in time $\bigO(|G|)$ 
	and partitions $V(G)$ into sets $\YYYY$ such that the following hold:
	\begin{enumerate}
		\item $S \subseteq Y_0$ and $|Y_0| = \bigO(|S|)$; 
		\item for $1 \leq i \leq \ell$, $Y_i$ induces a set of connected 
			components of $G-Y_0$ that have 
		the same neighborhood in $Y_0$ of size at most $2^{d+1} + 2\cdot f(2^d)$;
		\item $\ell\leq \big(4^{f(2^d)}+2f(2^d)\big) \cdot |S| = \bigO(|S|)$.
	\end{enumerate}
\end{lemma}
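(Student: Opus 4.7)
The plan is to run Algorithm~\ref{fig:MarkingAlg} with threshold $t := 2 f(2^d)+1$, set $Y_0 := S \cup V(\mc{M})$, and partition the connected components of $G - Y_0$ into classes $Y_1,\ldots,Y_\ell$ by common neighborhood in $Y_0$. Every component $C$ of $G-S$ has treedepth at most $d$, so any DFS-tree of $C$ has depth at most $2^d-1$, and its standard ancestor-path decomposition yields a path-decomposition of $C$ of width at most $2^d-2$ in linear time, supplying the input of the marking step.

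The structural heart of the proof is the claim that every connected component $C'$ of $G - Y_0$ satisfies $|N_S(C')| < t$. The case where $C'$ sits in a component $C$ with $|N_S(C)| < t$ is trivial, so assume $C$ was processed by the marking loop. Every vertex of $C'$ lies outside $Y_0$ and is therefore never removed from $\mc{P}_C$; let $B_R$ be the rightmost bag of $\mc{P}_C$ whose content meets $C'$. Then $B_R$ is unmarked, and all of $C'$ already lies in $V(G_{B_R})$ at the moment $B_R$ was processed. The connected component $K$ of $G_{B_R}$ through any $u \in C'$ therefore contains $C'$, so $|N_S(K)| \ge |N_S(C')|$; hence $|N_S(C')| \ge t$ would have forced the algorithm to add $B_R$ to $\mc{M}$, contradicting $B_R$'s unmarked status.

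A second observation, using the separator property of the path-decomposition, bounds $|N_{V(\mc{M})}(C')|$. Let $R$ denote the maximal run of unmarked bags containing $C'$, and let $B^L, B^R$ be the marked bags of $\mc{P}_C$ immediately flanking $R$ (either may be absent if $R$ reaches an endpoint). Any $v \in V(\mc{M})$ adjacent in $G$ to some $u \in C'$ must share an original bag with $u$, so $v$'s bag-interval enters $R$; since $v$ also lies in a marked bag, contiguity of its interval forces $v \in B^L \cup B^R$. Therefore $|N_{V(\mc{M})}(C')| \le 2(2^d-1)$, and combining with the previous claim yields $|N_{Y_0}(C')| < t + 2(2^d-1) < 2^{d+1} + 2 f(2^d)$, establishing property~(2).

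For properties~(1) and~(3), note that each marked bag $B$ witnesses a distinct connected subgraph $C_B$ of $G-S$ (a component of the current $G_B$) with $|N_S(C_B)| \ge t > 2 f(2^d)$ and diameter at most $2^d-1$; these witnesses are pairwise disjoint (within a single $\mc{P}_C$ because $B$ separates $C_B$ from subsequently processed bags, and across different $\mc{P}_C$'s trivially). Corollary~\ref{cor:LargeDegree} then bounds the number of marked bags by $2 f(2^d) \cdot |S|$, so $|Y_0| \le |S| + (2^d-1)\cdot 2 f(2^d) \cdot |S| = \bigO(|S|)$. Partitioning the components of $G - Y_0$ (whose diameter is at most $2^d-1$) by their $Y_0$-neighborhood and applying Corollary~\ref{cor:NumClusters} with $\delta = 2^d-1$ gives $\ell \le (4^{f(2^d)} + 2 f(2^d)) \cdot |Y_0| = \bigO(|S|)$. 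Linear-time execution is obtained by maintaining the connected components of $G_B$ and their $N_S$-counts with a union-find structure during the left-to-right sweep; the main conceptual hurdle is the snapshot argument above, which is what justifies a single pass of the marking loop.
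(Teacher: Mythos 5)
Your proof is correct and follows essentially the same route as the paper's: the same marking algorithm with threshold $t=2f(2^d)+1$, the same disjoint large-neighborhood witnesses fed into Corollary~\ref{cor:LargeDegree}, and the same clustering via Corollary~\ref{cor:NumClusters}; your snapshot and bag-interval arguments merely make explicit what the paper states more tersely. The only cosmetic difference is that you bound $\ell$ by $\big(4^{f(2^d)}+2f(2^d)\big)\cdot|Y_0|$ rather than $\cdot\,|S|$, which is the honest application of the corollary and still yields $\bigO(|S|)$.
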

\begin{algorithm}[t]
        \small
        \caption{{\label{fig:MarkingAlg}\sc Bag marking algorithm}}
        \KwIn{A graph $G$, a subset $S \subseteq V(G)$ such that
        $\td (G-S) \leqslant d$, and an integer $t > 0$.}
        \BlankLine

        Set $\mc{M} \leftarrow \emptyset$ as the set of marked bags\;

        \For{each connected component~$C$ of $G-S$ such that $N_S(C) \geq t$}
        {Choose an arbitrary vertex $v \in V(C)$ as a root and construct a
        DFS-tree starting at~$v$\;

        Use the DFS-tree to obtain a path-decomposition $\mc{P}_C = (P_C, \mc{B}_C)$
        of width at most $2^{d} - 2$ in which the bags are ordered from left to right\;
        }

        \BlankLine

        Repeat the following loop for the path-decomposition $\mc{P}_C$ of every $C$\;
        \While{$\mc{P}_C$ contains an unprocessed bag}{

        \BlankLine

        Let $B$ be the leftmost unprocessed bag of $\mc{P}_C$\;

        Let $G_B$ denote the subgraph of $G$ induced by the vertices in the bag~$B$
        and in all bags to the left of it in $\mc{P}_C$.

        \BlankLine

        \textbf{[Large-subgraph marking step]}\\
        \If{$G_B$ contains a connected component $C_B$ such that $|N_{S}(C_B)|\geqslant t$}{
        $\mc{M} \leftarrow\mc{M} \cup \{B\}$ and remove the vertices of $B$ from every bag
        of $\mc{P}_C$\;
        }
        \BlankLine Bag $B$ is now processed\; }
        \BlankLine
        \Return{$Y_0 = S \cup V(\mc{M})$}\;
		\caption{{\sc Bag marking algorithm}} \label{fig:MarkingAlg}
\end{algorithm}

\begin{proof}
	We first construct a DFS-forest~$\cal{D}$ of $G-S$. Assume that there are $q$ 
	trees $T_1, \ldots, T_q$ in this forest rooted at $r_1, \ldots, r_q$, respectively. 
	Since $\td(G-S) \leq d$, the height of every tree in~$\cal{D}$ is at most~$2^{d}-1$. 
	Next we construct for each $T_i$, $1 \leq i \leq q$, a path decomposition of the 
	subgraph of $G[V(T_i)]$ as follows. Suppose that $T_i$ has leaves $l_1, \ldots, l_s$ 
	ordered according to their DFS-number. For $1 \leq j \leq s$, create a bag~$B_j$ 
	containing the vertices on the unique path from $l_j$ to $r_i$ and string 
	these bags together in the order $B_1, \ldots, B_s$. Clearly, this is a path 
	decomposition~$\mathcal{P}_i$ of $G[V(T_i)]$ with width at most $2^d-2$. 
	Note that the root~$r_i$ is in every bag of~$\cal P_i$. These first steps
	are depicted in the first loop of Algorithm~\ref{fig:MarkingAlg} and
	clearly run in linear time.

	We now use a marking algorithm similar to the one in~\cite{KLPRRSS12} to 
	mark $\bigO(|S|)$ bags in the path decompositions $\mathcal{P}_1, \ldots, \mathcal{P}_q$ 
	with the property that each marked bag can be uniquely identified with a 
	connected subgraph of $G-S$ that has a large neighborhood in the modulator~$S$. 

	This algorithm is described in Figure~\ref{fig:MarkingAlg} in which we 
	set $t$, the size of a {\em large neighborhood} in $S$, to be $t:= 2\cdot f(2^d)+ 1$. 
	Note that there is a one-to-one correspondence between marked bags $\mathcal{M}$ 
	and connected subgraphs with a neighborhood of size at least~$t$ in $S$.  
	Moreover each connected subgraph has treedepth at most~$d$ and hence diameter 
	at most~$2^d-1$. By Corollary~\ref{cor:LargeDegree}, the number of connected subgraphs
	of large neighborhood and hence the number of marked bags 
	is at most $2 \cdot f(2^d-1+1) \cdot |S| = \bigO(|S|)$. 
	We set $Y_0 := V(\mathcal{M}) \cup S$. As the marking stage of the algorithm
	runs through every path-decomposition of the components of $G-S$ exactly once, 
	this phase takes only linear time.

	Now observe that each connected component in $G - Y_0$ has less than 
	$t = 2 \cdot f(2^d) + 1$ neighbors in~$S$: for every connected 
	subgraph~$C$ with at least~$t$ neighbors in~$S$, there exists a marked 
	bag~$B$. Importantly, the bag~$B$ was the \emph{first} bag that was marked before 
    the number of neighbors in~$S$ of \emph{any} connected subgraph reached the threshold~$t$. 
    Hence each connected component  of $G[V(C) \setminus B]$  has degree less than $t$ in~$S$. 
    Since every component is connected to at most two marked bags (in $Y_0$) and 
	since each bag is of size at most~$2^d-1$, the size of the neighborhood of 
	every component of $G-Y_0$ in~$Y_0$ is at most $2(2^d-1)+t \leq 2^{d+1} + 2 \cdot f(2^d)$. 
	
	To complete the proof, we simply cluster the connected components of $G-Y_0$ 
	according to their neighborhoods in $Y_0$ to obtain the sets $Y_1, \ldots,Y_\ell$. 
	Since each connected component of $G-S$ is of diameter $\delta\leq 2^d-1$,
	by Corollary~\ref{cor:NumClusters}, the number~$\ell$ of clusters is at most 
	$\big(4^{f(2^d)}+2f(2^d)\big) \cdot |S| = \bigO(|S|)$, as claimed.

	To accomplish this feat in linear time, we assume an arbitrary order
	of the vertices in $Y_0$ (say, the order in which they appear in the encoding of
	the graph). A simple bipartite auxillary graph with partitions $Y_0$ and $V(G)\setminus Y_0$
	with edge set $(Y_0 \times V(G)\setminus Y_0) \cap E(G)$ can be used to find the
	neighbors of a vertex $v \not \in Y_0$ inside $Y_0$ in constant time as the number of 
	neighbors of such a vertex is at most $2^{d+1}+2 \cdot f(2^d)=\bigO(1)$. Thus, computing the
	neighbors in $Y_0$ of every component of $G - Y_0$ takes only linear time. 
	If we store the neighborhoods of these components as lists sorted according to the
	ordering of $Y_0$ inside an array of length $O(|S|)$, we can sort this array in linear time
	using bucket sort: each entry of the list has encoding length at most $\log |V(G)|$, therefore 
	they can be compared in constant time under the usual RAM-model. The clusters can then
	be simply read from the array. Thus the clustering of the components of $G-Y_0$ and therefore 
	the whole decomposition is linear-time computable.
\end{proof}

\noindent
To prove a linear kernel, all that is left to show is that each cluster $Y_i$, $1 \leq i \leq \ell$, can be reduced to constant 
size. Note that each cluster is separated from the rest of the graph via a small set of vertices in $Y_0$ and that each 
component of $G-Y_0$ has constant treedepth even when its bounday is included. 
These facts enable us to use the protrusion reduction rule.

In the proof of the following lemma it will be convenient to use the
following normal form of tree decompositions: A triple $(T, \{ W_x \mid x
\in V(T) \}, r)$ is a \emph{nice tree decomposition} of a graph $G$ if $(T,
\{ W_x \mid x \in V(T) \})$ is a tree decomposition of~$G$, the tree $T$ is
rooted at node $r \in V(T)$, and each node of $T$ is of one of the following
four types:
\begin{enumerate}\parskip0pt
	\item a \emph{leaf node}: a node having no children and containing
	  exactly one vertex in its bag;
	\item a \emph{join node}: a node $x$ having exactly two children
	  $y_1,y_2$, and $W_x=W_{y_1}=W_{y_2}$;
	\item an \emph{introduce node}: a node $x$ having exactly one child
	  $y$, and $W_x=W_y\cup \{v\}$ for a vertex $v$ of $G$ with $v \not \in W_y$;
	\item a \emph{forget node}: a node $x$ having exactly one child $y$,
	  and $W_x=W_y\setminus \{v\}$ for a vertex $v$ of $G$ with $v \in W_y$.
\end{enumerate}
Given a tree decomposition of a graph $G$ of width $w$, one can
effectively obtain in time $\bigO(|V(G)|)$ a nice tree decomposition of $G$ with
$\bigO(|V(G)|)$ nodes and of width at most~$w$~\cite{BK91}.

For the next statement, recall our fixed problem $\Pi \in \frak{P}$, the (implicitly given) finite set
$\Rep$ of representatives of the equivalence classes of the relations $\equiv_{\Pi,i}$,
$i=1,\dots,t$, restricted graphs of treedepth $\leq d$, 
and $\prot$ the size of the largest member(s) of $\Rep$.

\begin{lemma}\label{lemma:ProtrusionReplacement}
	For fixed~$d,h \in \N$ (constants) and $\cal K$ a hereditary graph class,
	let~$(G,\probpar)$ be an instance of~$\Pi$ 
        with $G \in\cal K$ and let~$S \subseteq V(G)$ be a treedepth-$d$ modulator of~$G$. 
	Let \YYYY be a vertex partition of $V(G)$ such that
    \begin{itemize}\parskip 1pt
	\item $S \subseteq Y_0$,
	\item $N(Y_i)\subseteq Y_0$ and $|N_{Y_0}(Y_i)| \leq h$
	 for $1 \leq i \leq \ell$, and
	\item $N_{Y_0}(Y_i)\not=N_{Y_0}(Y_j)$ for $i\not=j$.
    \end{itemize}
	Then one can in $\bigO(|G|+\log \probpar)$ time obtain an instance
	$(G',\probpar')$ and a vertex partition \primedYYYY of~$V(G')$ such that
	\begin{enumerate}\parskip0pt
		\item $(G,\probpar) \in \Pi$ if and only if $(G',\probpar') \in \Pi$;
		\item $G' \in\cal K$ is an induced subgraph of $G$ with $Y_0' = Y_0$;
		\item for $1 \leq i \leq \ell$ it is $|N_{Y_{0}'}(Y_i')| \leq h$, 
			$\td(G[Y_i']) \leq d$, and\\
			$|Y_i'| \leq \rho(d+h,d+h) = \bigO(1)$.
	\end{enumerate}	
\end{lemma}
\begin{proof}
	Since $S \subseteq Y_0$ is a treedepth-$d$ modulator, for all $1 \leq i \leq \ell$, 
	we have $\td(G[Y_i]) \leq d$ and hence $\tw(G[Y_i]) \leq d-1$. 
	Moreover treedepth at most~$d$ implies diameter at most~$2^d-1$ for each component.  
	Since $Y_0' = Y_0$, let $N(X)$ stand for $N_{Y_0}(X)=N_{Y_0'}(X)$.
	For each index~$1 \leq i \leq \ell$, our algorithm constructs a tree-decomposition 
	of $G[Y_i \cup N(Y_i)]$ of width~$d-1+h$ that satisfies certain properties 
        that we mention below.

	The algorithm then uses the tree-decomposition to 
	replace~$Y_i$ in a systematic manner using the protrusion replacement
	Rule~\ref{rrule:Protrusion}. 
	The special properties of the tree-decomposition enable the algorithm 
	to perform this replacement in $\bigO(|Y_i \cup N(Y_i)|)$ time. Total 
	time used to replace all sets~$Y_i$ is $\sum_{i=1}^{\ell}|Y_i \cup N(Y_i)|$.
	Since, by Corollary~\ref{cor:NumClusters} (with $Y_0$ in the place of $S$), 
	$\sum_{i=1}^{\ell}|N(Y_i)| = \bigO(\ell) = \bigO(|Y_0|)$, 
	the running time is indeed $\bigO(|G|)$. It therefore suffices to
	specify	the properties of our tree-decompositions and describe how each~$Y_i$ 
	is replaced with~$Y_i'$.

\smallskip
	The desired tree-decomposition $\mathcal{T}_i = \big(T_i, \{W_x \mid x \in V(T_i)\}\big)$ 
	of width at most~$d+h-1$ for $G[Y_i \cup N(Y_i)]$ satisfies the following conditions: 
	\begin{enumerate}\parskip0pt
		\item there is a node~$r \in V(T_i)$ such that $W_r = N(Y_i)$;
		\item the tree-decomposition is nice and the leaf bags contain one vertex. 
	\end{enumerate}  
	The first condition can be achieved by simply modifying the graph $G_i$ so that 
	$N(Y_i)$ induces a clique, and then introducing an extra node $r$
	if no such node exists. The decomposition~$\mathcal{T}_i$ is rooted at the node~$r$.
	For~$x \in V(T_i)$, we let~$\widetilde{G}_x$ denote the $t$-boundaried 
	graph induced by the vertices in the bags of the subtree of~$T_i$
	rooted at~$x$. That is, formally,
	\[
		{G}_x := G \left [ 
			W_x\cup \bigcup\nolimits_{y\mbox{ \scriptsize 
				descendant of }x} W_y \right ]
		\mbox{ and~ }
		\widetilde{G}_x := \left( G_x, W_x \right),
	\]    
	where the boundary $\bound(G_x) = W_x$ is of size $t\leq d+h$. 
	Then ${G}_r = G[Y_i \cup N(Y_i)]$. 
	Note that the treedepth of $G_x$ is at most~$d+|W_x\cap S|\leq d+h$.

	Recall that $\Pi$ has \FII either on general graphs or on bounded treedepth graphs. 
        Using Lemma~\ref{lemma:ProtInducedSubgraph}, for each $x \in V(T_i)$, 
	there exists a representative $\Lambda(x)\in \mathcal{R}(d+h,d+h)$
	of $\widetilde{G}_x$ which is an induced subgraph of $G_x$
	and $\bound(\Lambda(x)) = \bound(G_x)$.
	Replacing $\widetilde{G}_x$ by $\Lambda(x)$ hence does not increase the treedepth. 
	Furthermore, $|\Lambda(x)|\leq M:=\rho(d+h,d+h)$ which is a constant.
	Let $\mu(x) = \Delta_{t}(\widetilde{G}_x,\Lambda(x))$. 

	Our task is to find~$\Lambda(r)$ and $\mu(r)$ which we will calculate 
	in a bottom-up manner along $T_i$ in $\bigO(|Y_i|)$ 
	time as follows. If $y \in V(T_i)$ is a leaf node then these values
	can be computed in constant time. Let $x \in V(T_i)$ be a node with 
	exactly one child $y$ whose $\Lambda$ and $\mu$ values are known. 
	Consider the $t$-boundaried graph $\widetilde{G}'_x$ where $t\leq d+h$ and
	$${G'_x} := (G_x \ominus_{W_y} G_y) \oplus_{W_y} \Lambda(y)
	\mbox{ with } \bound(G'_x) = W_x .$$
	We claim that 
	$\widetilde{G}'_x \equiv_{t} \widetilde{G}_x$. To prove this, we need to 
	demonstrate that for all $t$-boundaried graphs $\widetilde{G}$ and all $\probpar \in \N$,
	\[(\widetilde{G}'_x \oplus_{W_x} \widetilde{G}, \probpar) \in \Pi \ \text{if and only if} \ 
		(\widetilde{G}_x \oplus_{W_x} \widetilde{G}, \probpar - \mu') \in \Pi,
	\]
	where~$\mu' = \Delta_{t}(\widetilde{G}_x,\widetilde{G}'_x)$ is to be specified. Now
	\begin{align*}
		(\widetilde{G}'_x \oplus_{W_x} \widetilde{G}, \probpar) \in \Pi 
			\ & \text{iff} \ ((\widetilde{G}_x \ominus_{W_y} G_y) \oplus_{W_y} \Lambda(y)) \oplus_{W_x}\widetilde{G}, \probpar) \in \Pi \\
			\ & \text{iff} \ ((\widetilde{G}_x \oplus_{W_x}\widetilde{G}) \ominus_{W_y} G_y) \oplus_{W_y} \Lambda(y), \probpar) \in \Pi \\
			\ & \text{iff} \ ((\widetilde{G}_x \oplus_{W_x}\widetilde{G}) \ominus_{W_y} G_y) \oplus_{W_y} \widetilde{G}_y, \probpar - \mu(y)) \in \Pi, 
	\end{align*}
	where the last step follows because of $\Lambda(y) \equiv_{t} \widetilde{G}_y$. 
	Since 
	$$(\widetilde{G}_x \oplus_{W_x} \widetilde{G}) \ominus_{W_y} G_y) \oplus_{W_y} \widetilde{G}_y
	 \>=\> \widetilde{G}_x \oplus_{W_x} \widetilde{G} ,$$
	this proves our claim.
	In fact, $\mu'=\mu(y)$.

	Observe that $G_x'$ is of \emph{constant} size, bounded from 
	above by $M+|W_x|\leq M+d+h=\bigO(1)$. Since $\Lambda(y)$ is an induced subgraph of $G_y$,
	it follows that $G_x'$ is an induced subgraph of $G_x$ and therefore has treedepth 
	at most~$d+h$.
	Then we can find in constant time the associated representative
	$\widetilde{R}\in\mathcal{R}(d+h,d+h)$ of $\widetilde{G}'_x$.
	We set $\Lambda(x) := \widetilde{R}$ and $\mu(x) := \mu' +
	 \Delta_{t}(\widetilde{G}'_x,\widetilde{R})$. 
	Note that the total time spent at node~$x$ to generate these values is a constant.  	

	Lastly, consider the case when $x \in V(T_i)$ has exactly two 
	children~$y_1$ and~$y_2$ whose $\Lambda$ and $\mu$ values are known. 
	Since our tree-decomposition is nice, we have $W_{y_1} = W_x = W_{y_2}$ and 
	therefore $\bound(G_{y_1}) = \bound(G_{y_2}) = W_x$. Take the $t$-bound\-aried graph 
	$\widetilde{G}_x''$ where $t\leq d+h$ and
	$${G_x''} := \Lambda(y_1) \oplus_{W_x} \Lambda(y_2) \mbox{ with }
		\bound(G_x'') = W_x .$$
	Similarly as in the previous case, one can show that for all graphs 
	$\widetilde{G}$ and all $\probpar \in \N$,
	\[(\widetilde{G}_x'' \oplus_{W_x} \widetilde{G}, \probpar) \in \Pi \ \text{if and only if} \ 
		(\widetilde{G}_x \oplus_{W_x} \widetilde{G}, \probpar - \mu'') \in \Pi, 
	\]
	where  $\mu'' = \mu(y_1)+\mu(y_2)$.
	The graph ${G_x''}$ has size at most $2M$ which is a constant. 
	One can therefore, again in constant time, calculate a 
	representative $\widetilde{R} \in \mathcal{R}(d+h,d+h)$ of $\widetilde{G}_x''$. 
	We set $\Lambda(x) := \widetilde{R}$ and $\mu(x) := \mu''+
	 \Delta_{d+h}(\widetilde{G}_x'',\widetilde{R})$.

\smallskip
	To summarize, our proof shows that one can, independently for each
	$i\in\{1,\dots,\ell\}$,	in time $\bigO(|T_i|)=\bigO(|Y_i|)$ obtain 
	$\Lambda(r)$ and $\mu(r)$ (where $r$ is the root of the
	tree-decomposition $\mathcal{T}_i$ for $G[Y_i \cup N(Y_i)]$) 
	with the following properties:
	for all graphs $\widetilde{G}$ and all $\probpar \in \N$,
	$$(\widetilde{G}_r \oplus \widetilde{G}, \probpar) \in \Pi
	\mbox{ if and only if }
	(\Lambda(r) \oplus \widetilde{G}, \probpar+\mu(r)) \in\Pi .$$
	Let $\mu_i:=\mu(r)$ and
	$Y'_i:=V(\Lambda(r))\setminus Y_0$ be the chosen replacement of the cluster $Y_i$.
	Then $G[Y'_i]$ is an induced subgraph of $G[Y_i]$ of constant size, and the
	neighborhood of $Y_i'$ inside $Y_0$ is untouched. 
	It immediately follows that $\td(G[Y'_i]) \leq \td(G[Y_i]) \leq d$ as	claimed, too. 

	Finally, let $G':=G[Y_0\cup Y_1'\cup\dots\cup Y_\ell']$
	and $\probpar':=\probpar+\mu_1+\dots+\mu_\ell$.
	The equivalence of the instances $(G,\probpar)$ and $(G',\probpar')$
	of $\Pi$ then immediately follows from the safety of the protrusion replacement
	Rule~\ref{rrule:Protrusion}.
\end{proof}

\noindent
With the lemmas at hand we can now prove the main theorem of this section.

\begin{proof}[Proof of Theorem~\ref{theorem:KernelBoundedExp}]
    Given an instance $(G,\probpar)$ of $\Pi$ with $G \in \mathcal K$, 
    we calculate a $2^d$-approximate modulator $S$ using
    Lemma~\ref{lemma:TDModApprox}. Using the algorithm outlined in
    the proof of Lemma~\ref{lemma:Decomposition}, we compute the
    decomposition $\YYYY$. Each cluster $Y_i, 1 \leq i \leq \ell$,
    forms a protrusion with boundary size $|N(Y_i)| \leq 2^{d+1}+2f(2^d) =: h$
    and treedepth (and thus treewidth) at most~$d$. 

    Applying Lemma~\ref{lemma:ProtrusionReplacement} now yields an
    equivalent instance $(G',\probpar')$ with $|V(G')| = |Y_0| + \sum_{i=1}^{\ell} |Y'_i|$
	vertices, where $Y'_i$ denote the clusters obtained through applications of the reduction rule. 
	This quantity is at most $\bigO(|S|) + \ell \cdot \rho(d+h,d+h) = \bigO(|S|)$
	by Lemma~\ref{lemma:Decomposition}\,(3).
	As $G'$ is an induced subgraph of $G$, the above implies that $|V(G')|+|E(G')| = \bigO(|S|)$
	by the degeneracy of $G$, and that $G' \in \mathcal K$.
\end{proof}

\subsection{Problems having finite integer index}
\label{sub:problemsFII}

Several graph problems have finite integer index on the class of all graphs and thus admit linear kernels
on graphs of bounded expansion if parameterized by a treedepth modulator.

\begin{corollary}\label{cor:fiiGeneral}
The following graph problems have finite integer index, and hence have linear kernels
in graphs of bounded expansion, when the parameter is the size of a modulator to constant treedepth:
\name{Dominating Set}, 
\name{$r$-Dominating Set},
\name{Efficient Dominating Set},
\name{Connected Dominating Set},
\name{Vertex Cover},
\name{Hamiltonian Path},
\name{Hamiltonian Cycle},
\name{Connected Vertex Cover},
\name{Independent Set},
\name{Feedback Vertex Set}, 
\name{Edge Dominating Set},
\name{Induced Matching},
\name{Chordal Vertex Deletion},
\name{Odd Cycle Transveral},
\name{Induced $d$-Degree Subgraph},
\name{Min Leaf Spanning Tree},
\name{Max Full Degree Spanning Tree}.
\end{corollary}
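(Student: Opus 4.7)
The plan is to observe that the corollary is an immediate consequence of Theorem~\ref{theorem:KernelBoundedExp} once we have verified, problem-by-problem, that each entry on the list has \FII in the class of all graphs (which certainly implies \FII on the subclass of graphs of any bounded treedepth). So the whole work lies in establishing \FII for every listed problem. I would separate the list into three groups and handle them accordingly.

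First, I would invoke prior results from the protrusion-based kernelization literature. The problems \name{Vertex Cover}, \name{Independent Set}, \name{Dominating Set}, \name{$r$-Dominating Set}, \name{Efficient Dominating Set}, \name{Connected Dominating Set}, \name{Edge Dominating Set}, \name{Feedback Vertex Set}, \name{Induced Matching}, \name{Odd Cycle Transversal}, and \name{Min Leaf Spanning Tree} are all known to have \FII in general graphs; the required references are collected in~\cite{BFLPST09,FLST10,KLPRRSS12} (where \FII is verified directly, or via the ``strong monotonicity + \MSOii-optimization'' framework). For \name{Hamiltonian Path} and \name{Hamiltonian Cycle} the \FII property on general graphs is also known and is noted in the introduction; the witnessing equivalence groups two $t$-boundaried graphs whenever they realize the same collection of ``Hamiltonian connection patterns'' on their terminals, of which there are only finitely many for each fixed~$t$.

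For the remaining problems --- \name{Connected Vertex Cover}, \name{Chordal Vertex Deletion}, \name{Induced $d$-Degree Subgraph}, and \name{Max Full Degree Spanning Tree} --- I would give a direct Myhill--Nerode style argument. To every $t$-boundaried graph $G$ one associates a finite \emph{signature} $\sigma_t(G)$ whose coordinates are indexed by the possible ``boundary behaviors'' of a partial solution: for \name{Connected Vertex Cover} this is a labeling of $\bound(G)$ as selected/unselected together with a partition recording which boundary vertices are linked by selected paths inside $G$; for \name{Induced $d$-Degree Subgraph} and \name{Max Full Degree Spanning Tree} it is the residual degree at each boundary vertex together with a connectivity partition; for \name{Chordal Vertex Deletion} it is a list of ``partial induced cycles'' traversing the boundary together with a deletion-pattern on $\bound(G)$. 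In every case the coordinate of $\sigma_t(G)$ stores the \emph{optimum cost contribution} of $G$ under that behavior. Two $t$-boundaried graphs with equal signatures \emph{up to a common additive shift} are $\equiv_{\Pi,t}$-equivalent, and the number of signatures modulo such shifts is bounded by a function of~$t$ only; hence \FII.

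The main obstacle is the \name{Chordal Vertex Deletion} case, since ``chordalness'' is witnessed by forbidden induced cycles of unbounded length that may cross the boundary many times. The resolution is that once $\bound(G)$ is fixed and a deletion pattern on $\bound(G)$ is fixed, the behavior of $G$ with respect to any external gluing is determined by the (finitely many) possible induced paths between pairs of boundary vertices in $G$ modulo which vertices are deleted --- a finite set for each fixed~$t$. Once \FII has been checked for every problem on the list, Theorem~\ref{theorem:KernelBoundedExp} yields, for each of them and for every fixed constant~$d$, a linear kernel on any class of bounded expansion when parameterized by the size of a treedepth-$d$ modulator, which is precisely the claim of the corollary.
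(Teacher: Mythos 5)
Your proposal matches the paper's treatment: the corollary is presented as an immediate consequence of Theorem~\ref{theorem:KernelBoundedExp} together with the fact that every listed problem has \FII on general graphs (hence on each bounded-treedepth subclass), for which the paper gives no argument of its own and simply cites~\cite{BFLPST09}. Your additional Myhill--Nerode signature sketches for \name{Connected Vertex Cover}, \name{Chordal Vertex Deletion}, \name{Induced $d$-Degree Subgraph}, and \name{Max Full Degree Spanning Tree} go beyond what the paper supplies, but those four are likewise covered by the cited prior work, so the overall route is the same.
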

\noindent For a more comprehensive list of problems that have \FII in general graphs (and hence fall under 
the purview of the above corollary), see~\cite{BFLPST09}.

Some problems do not have \FII in general (see~\cite{Flu97}) but only when restricted to graphs of bounded
treedepth, and for those we have the same conclusion in the following: 

\begin{lemma}\label{lemma:LongestPathFII}
        Let $\cal D$ be a graph class of bounded treedepth.
        Then the problems \name{Longest Path}, \name{Longest Cycle}, \name{Exact $s,t$-Path},
        \name{Exact Cycle} have \FII in $\cal D$.
\end{lemma}
\begin{proof}
	Let $\Pi$ be any one of the mentioned problems, and let $d,t$ be constants
	such that all graphs in $\cal D$ have treedepth~$\leq d$.
	Consider the class $\G_t$ of all $t$-boundaried graphs,
	and let $T = \{0,1,\dots,t\}$.
	
	We define a \emph{configuration} of $\Pi$ with respect to $\G_t$ as a multiset
	\[C = \{(s_1,d_1,t_1),\dots,(s_p,d_p,t_p)\}\]
	of triples from ${(T \times \N \times T)}$.
	We say a $t$-boundaried graph $\widetilde G \in \G_t$ \emph{satisfies} the configuration
	$C$ if there exists a set of (distinct) paths $P_1,\dots,P_p$ in $G$ such that
	\begin{itemize}
		\item $s_i,t_i$ can only be endvertices of $P_i$,
		$V(P_i) \cap \bound(G) \subseteq \{s_i,t_i\}$, 
		and $|P_i| = d_i$, for $1 \leq i \leq p$,
		\item $V(P_i) \cap V(P_j) \subseteq \bound(G)$ for $1 \leq i<j \leq p$,
		\item $V(P_i) \cap V(P_j) \cap V(P_k) =\emptyset$ for $1 \leq i<j<k \leq p$.
	\end{itemize}	
	Note that, for simplicity, we identify the boundary vertices in
	$\bound(G)$ with their labels $1,\dots,t$ from~$T$.
	Moreover, $s_i,t_i$ can take the value $0$ which is not contained in $\bound(G)$: 
	semantically these tuples describe paths which intersect the boundary of 
	$G$ at only one or no vertex. Another special case are tuples with 
	$s_i = t_i$ and $d = 0$: those describe single vertices of the boundary.
	In short, a graph satisfies a configuration if it contains internally 
	non-intersecting paths of length and endvertices 
	prescribed by the tuples of the configuration, and no three of the
	paths are prescribed to have the same endvertex (hence some
	configurations are not satisfiable at all, but this is a small technicality).
	
	The \emph{signature} $\sigma[\widetilde G]$ of a graph $\widetilde G \in \G_t$ is a function from
	the configurations into $\{0,1\}$ where $\sigma[\widetilde G](C) = 1$ iff $\widetilde G$ satisfies $C$. 
	We define:
	$$
		\widetilde{G}_1 \simeq_\sigma \widetilde{G}_2 ~\Longleftrightarrow~ 
				\sigma[\widetilde{G}_1] \equiv \sigma[\widetilde{G}_2]
		 ~\text{for}~ \widetilde{G}_1,\widetilde{G}_2 \in \G_t.
	$$
	We claim that the equivalence relation $\simeq_\sigma$
	is a refinement of $\equiv_{\Pi,t}$.
	We provide only a sketch for $\Pi=$\,\name{Longest Path}, the proofs for the other problems
	work analogous.	
	To this end we assume the contrary, that $\sigma[\widetilde{G}_1] \equiv
	\sigma[\widetilde{G}_2]$ while $\widetilde{G}_1 \not \equiv_t \widetilde{G}_2$.
	Up to symmetry, this means that for all integers $c$ there exists a graph
	$\widetilde{G}_3 \in\G_t$
	such that $(\widetilde{G}_1 \oplus \widetilde{G}_3,\ell) \in \Pi$ but
	$(\widetilde{G}_2 \oplus \widetilde{G}_3,\ell + c) \not \in \Pi$. 
	We choose $c = 0$ and show the contradiction. 
	Thus the graph $\widetilde{G}_1 \oplus \widetilde{G}_3$ contains a 
	path $P$ of length $\ell$ but $\widetilde{G}_2 \oplus \widetilde{G}_3$ does not. 

	Using the implicit order given through the vertex order of $P$ we sort the subpaths of $P$
	contained in $P \cap G_1$ and so obtain a sequence of paths
	$P_1,\dots,P_q\subseteq G_1$, each with at most two vertices -- the ends, in $\bound(G_1)$.
	By identifying each subpath $P_i$ with the tuple $(s_i,d_i,t_i)$ where $d_i = |P_i|$
	and $s_i$ is the label of the start of $P_i$ in $\bound(G_1)$ 
	(or $0$ if $s_i \not \in \bound(G_1)$)
	and $t_i$ the label of the end of $P_i$ in $\bound(G_1)$ (ditto), we
	obtain a configuration $C_P = \{(s_1,d_1,t_1),\dots,(s_q,d_q,t_q)\}$. 
	Now, $\widetilde{G}_1$ satisfies $C_P$ by the definition.
	Since $\sigma[\widetilde{G}_1](C_P)=\sigma[\widetilde{G}_2](C_P)$, there exists a set of
	paths $Q_1,\dots,Q_q\subseteq G_2$ witnessing that $\widetilde{G}_2$ satisfies $C_P$.
	But then $Q_1,\dots,Q_q$ together with $P\cap G_3$ form a path $Q$ of
	length $\ell$ in $\widetilde{G}_2 \oplus \widetilde{G}_3$, a contradiction.
	
	Second, although $\simeq_\sigma$ is generally of infinite index,
	we claim that for every~$t$, only a finite number of equivalence
	classes of $\simeq_\sigma$ carry a representative
	of treedepth~$\leq d$, and hence $\simeq_\sigma$ is of finite index
	when restricted to graphs from~$\cal D$.
	This is rather easy since graphs of treedepth $\leq d$ do not
	contain paths of length $2^d-1$ or longer, and so a graph
	$\widetilde G\in\cal D_t$ can satisfy a configuration $C =
	\{(s_1,d_1,t_1),\dots,(s_p,d_p,t_p)\}$ only if
	$d_i\in\{0,1,\dots,2^d-2\}$ for $1 \leq i \leq p$.
	Recall, each boundary vertex label occurs at most
	twice among $s_1,t_1,\dots,s_p,t_p$ in a satisfiable configuration.
	Hence only finitely many such configurations $C$ can be
	satisfied by a graph from $\cal D_t$, and consequently,
	finitely many function values of $\sigma[\widetilde G]$ are nonzero for any 
	$\widetilde G\in\cal D_t$ and the number of the nonempty classes of 
	$\simeq_\sigma$ restricted to $\cal D_t$ is finite.
\end{proof}

\noindent
Another exemplary problem which does not have FII on general graphs, but does
so on a restricted graph class, is the \textsc{Branchwidth} problem which is 
defined as follows.

A \emph{branch-decomposition} of a graph $G$ is a pair $(T,\tau)$ where $T$ is a
tree of maximum degree three and $\tau$ a bijective function $\tau:
E(G)\rightarrow \{t: \text{$t$ is a leaf of $T$}\}$.  For an edge $e$ of $T$,
the connected components of $T\setminus e$ induce a bipartition $(X,Y)$ of the
edge set of $G$.  The \emph{width} of $e$ is then defined as the number of
vertices of $G$ incident both with an edge of $X$ and an edge of~$Y$.  The
\emph{width} of $(T,\tau)$ is the maximum width over all edges of $T$.  The
\emph{branchwidth}  of $G$ is the minimum of the width of all
branch-decompositions of $G$.
The branchwidth of a graph class is bounded if and only if its treewidth is
bounded.  The \textsc{Branchwidth} problem is, given a graph~$G$ and an
integer~$k$, to decide whether $G$ has branchwidth at most $k$.  

\begin{lemma}\label{lemma:BranchwidthFII}
        Let $\cal B$ be a graph class of bounded branchwidth.
        Then \name{Branchwidth} has \FII in $\cal B$.
\end{lemma}
\begin{proof}
Let $\G_t$ be the class of all $t$-boundaried graphs.
Let $\cal X^w$ denote the set of minor-minimal graphs of branchwidth
greater than $w$ (we will see that $\cal X^w$ is finite
for every $w$ but that is not important for now). 
That is, $G\in\cal X^w$ if and only if the branchwidth of $G$ is
$>w$ but every proper minor of $G$ has branchwidth $\leq w$;
$\>G$ is an ``obstruction'' to branchwidth~$w$.
Let $\cal X^w_{*t}\subseteq\G_t$ be the ``$t$-boundaried fragments'' of members of 
$\cal X^w$, \ie 
$$\widetilde{F}\in \cal X^w_{*t} ~\iff~
\exists \widetilde{F}': \widetilde{F}\oplus \widetilde{F}'\in \cal X^w .$$

Let $\Pi$ be the problem \name{Branchwidth}.
The framework of the proof is very similar to that of
Lemma~\ref{lemma:LongestPathFII}; members of $\cal X^w_{*t}$ play the role
of configurations of $\Pi$ and a signature is a subset of 
$\cal X_{*t}:=\bigcup_{w}\cal X^w_{*t}$.
First, for a $t$-boundaried graph $\widetilde G$, the {\em signature} $\sigma[\widetilde G]$ is defined
as the set of those $\widetilde F\in\cal X_{*t}$ such that $\widetilde F$ is {\em rooted minor}
of $\widetilde G$, meaning that $F$ is a minor of $G$ in such a way that the
boundary $\bound(F)=\bound(G)$ is identical (not touched).
It is routine to verify that if, informally,
the same fragments of ``branchwidth obstructions'' exist in both
$\widetilde{G}_1$ and $\widetilde{G}_2$, then they are equivalent.
Formally;
$$\mbox{if } \sigma[\widetilde{G}_1]=\sigma[\widetilde{G}_2]
\mbox{, then }
\widetilde{G}_1 \equiv_{\Pi,t} \widetilde{G}_2
\mbox{ with } \Delta_{\Pi,t}(\widetilde{G}_1,\widetilde{G}_2)=0 .$$

Second, the equivalence relation $\simeq_{\sigma}$ on $\G_t$ defined by the same
signature $\sigma$ is generally of infinite index, though, we claim that
for every~$b,t$, only a finite number of equivalence
classes of $\simeq_{\sigma}$ carry a representative of branchwidth~$\leq b$.
This would follow if we proved that only finitely many elements of
$\cal X_{*t}$ have branchwidth~$\leq b$.
The latter is a nontrivial statement, possible thanks to some fine properties of
the ``branchwidth obstructions'' as proved in
\cite{GGRW03} (note that although the paper deals with matroids, its
results apply to graph branchwidth as well thanks to~\cite{HicksM07}).
Precisely, besides finiteness of $\cal X^w$ for each $w$, the following
claim \cite[Lemma~4.1]{GGRW03} is used:
\begin{itemize}\item[]
If $\widetilde{F},\widetilde{F}'$ are $t$-boundaried graphs such that 
$\widetilde{F}\oplus \widetilde{F}'\in \cal X^w$ and $t\leq w$,
then $|E(F)|\leq g(t)$ or $|E(F')|\leq g(t)$, where $g(t)=(6^{t-1}-1)/5$.
\end{itemize}

Assume now $\widetilde{F}\in \cal X_{*t}$ such that $F$ is of branchwidth~$b$,
and let $w_0=b+g(t)$.
Either, $\widetilde{F}\in\bigcup_{w<w_0}\cal X^w_{*t}$ which is a finite set,
or there is $\widetilde{F}'$ such that $\widetilde{F}\oplus \widetilde{F}'\in \cal X^w$ where $w\geq w_0$.
If $|E(F')|\leq g(t)$, then the branchwidth of $F$ is greater than
$w-|E(F')|\geq b+g(t)-g(t)=b$, a contradiction.
Therefore, by \cite{GGRW03}, we have $|E(F)|\leq g(t)$ and there are only
finitely many such $t$-boundaried graphs without isolated vertices.
\end{proof}

\noindent
Somehow surprisingly, it is not at all easy to extend the statement of
Lemma~\ref{lemma:BranchwidthFII} to the related problems \textsc{Pathwidth}
and \textsc{Treewidth}, since we have got no direct analogue of the results of
\cite{GGRW03} for the other measures.
See Section~\ref{sec:fii_pw_tw} for further details.

\begin{corollary}\label{cor:fiiBoundedTreedepth}
	The problems 
		\name{Longest Path}, 
		\name{Longest Cycle},
		\name{Exact $s,t$-Path}, 
		\name{Exact Path}, 
	 	and \name{Branchwidth}
	have  linear kernels in graphs of bounded expansion 
	with the size of a modulator to constant treedepth as the parameter.
\end{corollary}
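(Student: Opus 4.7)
\smallskip
\noindent\textit{Proof proposal.}
The plan is to simply combine the kernelization meta-theorem (Theorem~\ref{theorem:KernelBoundedExp}) with the two structural \FII results just established (Lemmas~\ref{lemma:LongestPathFII} and~\ref{lemma:TreewidthFII}). Recall that Theorem~\ref{theorem:KernelBoundedExp} requires, for its hypothesis, that the problem~$\Pi_{\cal G}$ has finite integer index on~$\cal G(p)$ (the subclass of $\cal G$ of graphs of treedepth at most~$p$) for every $p\in\N$. So all I need to verify is that each of the six problems listed satisfies this hypothesis on an arbitrary graph class $\cal G$ of bounded expansion; the linear kernel then follows by direct invocation of Theorem~\ref{theorem:KernelBoundedExp}.

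For \name{Longest Path}, \name{Longest Cycle}, \name{Exact $s,t$-Path} and \name{Exact Path/Cycle}, Lemma~\ref{lemma:LongestPathFII} exactly gives \FII on $\cal G(d)$ for every $d$, so these cases are immediate.

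For \name{Treewidth} and \name{Pathwidth}, Lemma~\ref{lemma:TreewidthFII} gives \FII on the subclass of graphs of treewidth at most~$w$, rather than treedepth at most~$d$. Here I invoke the well-known fact (listed right after Proposition~\ref{pro:compute-td}) that $\tw(G)\leq \td(G)-1$, so $\cal G(d)$ (treedepth at most~$d$) is a subclass of the class of graphs of treewidth at most~$d-1$. The key observation is then that \FII passes to subclasses: if only finitely many equivalence classes of $\equivpi{t}$ contain a representative from a class~$\cal G'$, then only finitely many contain a representative from any $\cal G''\subseteq \cal G'$, because the underlying equivalence relation $\equivpi{t}$ is the same. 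Applying this with $\cal G''=\cal G(d)$ and $\cal G'$ the class of treewidth-$(d{-}1)$ graphs yields \FII on~$\cal G(d)$ for every~$d$, as required.

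Having verified the hypothesis in both cases, Theorem~\ref{theorem:KernelBoundedExp} directly produces a linear kernel parameterized by the size of a treedepth-$d$ modulator on any class of bounded expansion. No additional technical work is needed; the entire content of the corollary lies in the two preceding \FII lemmas plus the treewidth-versus-treedepth inequality, and the only ``obstacle'' is the trivial verification that \FII is hereditary under taking subclasses of the ambient graph class.
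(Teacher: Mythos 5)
Your proof is correct and matches the paper's (implicit) reasoning: the corollary is stated without proof precisely because it follows immediately from Theorem~\ref{theorem:KernelBoundedExp} combined with Lemmas~\ref{lemma:LongestPathFII} and~\ref{lemma:TreewidthFII}, and you have filled in the one needed bridging observation---that $\td(G)\leq d$ implies $\tw(G)\leq d-1$, so \FII on the treewidth-$(d-1)$ subclass (Lemma~\ref{lemma:TreewidthFII}) passes to the treedepth-$d$ subclass since it is a subclass and the equivalence relation $\equiv_{\Pi_{\cal G},t}$ is unchanged. Nothing is missing.
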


\subsection{Extension to larger graph classes}

\noindent
We can extend our result to classes of graphs that are \emph{nowhere dense},
which present a wider framework than classes of bounded expansion.

\begin{definition}[Nowhere dense~\cite{NOdM10,NOdM11}]
	A graph class $\cal K$ is \emph{nowhere dense} if for all $r \in \N$ it holds
	that $\omega(\cal K \nab r) < \infty$.
\end{definition}

\noindent
In the above definition we use the natural extension of $\omega$ to classes of graphs via
$\omega(\cal K) = \sup_{G \in \cal K} \omega(G)$. Note that nowhere dense classes are closed under
taking shallow minors in the sense that $\cal K \nab r$ is nowhere dense
if $\cal K$ is, albeit with a different bound on the clique size of $r$-shallow minors.

We claim the following kernelization result for
nowhere dense classes, which in particular applies to all problems listed in Section~\ref{sec:Kernels}.

\begin{theorem}\label{theorem:KernelNowhereDense}
    Let a class $\cal K$ be hereditary and nowhere dense and let~$d \in \N$ be a constant.
	Let~$\Pi \in \PPP$. There exist an algorithm that takes as input
    $(G,\probpar) \in \cal K \times \N$ and, in time $\bigO(|G|^{1+\varepsilon})$ for
    every~$\varepsilon > 0$, it outputs
    $(G',\probpar')$ such that
	\begin{enumerate}
            \item $(G,\probpar) \in \Pi$ if and only if $(G',\probpar') \in \Pi$;
            \item $G'$ is an induced subgraph of $G$; and
            \item  for every  $\varepsilon > 0$, 
		$|G'| = \bigO(|S|^{1+\varepsilon})$, where~$S$ is an optimal treedepth-$d$
                modulator of~$G$.
    \end{enumerate}
\end{theorem}

\noindent 
Here we use the nowhere-dense variant of Lemma~\ref{lemma:TDModApprox} to obtain
an approximate treedepth-modulator in almost linear time.
The proof of~\ref{theorem:KernelNowhereDense} follows analogously to the  proof of
\ref{theorem:KernelBoundedExp}, while replacing
Lemma~\ref{lemma:BipartiteNab1} with Lemma~\ref{lemma:BipartiteGeneral} (see
below) and using the following property of nowhere dense classes:

\begin{proposition}[\cite{NOdM08}, also {\cite[Section 5.4]{NOdM12}}]\label{prop:NowhereDenseEdg}
	Let~$\cal G$ be a nowhere dense graph class.
	Then for every~$\alpha > 0$ and every~$r \in\N$ there exists
	$n_{\alpha,r} \in\N$ such that for every $G \in \cal G$ with
	$|G|>n_{\alpha,r}$ it holds that
			$
				\grad_r(G) \leq |G|^\alpha
			$.
\end{proposition}

\noindent
We need additional notation.
For a graph class $\G$ and an integer $p$ we let 
$\G_{\leq p} := \{ H \in \G \mid |H| \leq p \}$ denote those graphs of 
$\G$ which have at most $p$ vertices.
We shortly write $G_{\leq p}$ for $(G\nab0)_{\leq p}$.

\begin{lemma}\label{lemma:BipartiteGeneral}
	Let $G=(X,Y,E)$ be a bipartite graph, and
	$p\geq\grad_1\big(G_{\leq|X|^2}\big)$. Then there are at most 
	\begin{enumerate}
		\item $2p\cdot|X|$ vertices in $Y$ with degree 
			greater than $\omega(G\nab1)$;
		\item $(2p)^{\omega(G\nab1)}\cdot|X|$ 
			subsets $X' \subseteq X$ such that $X' = N(u)$ for some 
			$u \in Y$.
	\end{enumerate}
\end{lemma}
\begin{proof}
	We construct a sequence of graphs $G_0:=G,G_1,\dots,G_\ell$ in the same
	way as in the proof of Lemma~\ref{lemma:BipartiteNab1}.
	Recall that $G_i \in G \nab 1$ for $1 \leq i \leq \ell$, and so
	$\omega(G_\ell[X])\leq\omega(G\nab1)$, in particular.
	Furthermore, since every step $i$ of the sequence adds an edge to
	$G_i[X]$, we have $\ell<|X|^2/2$ and, consequently, $G_\ell[X]$
	results by contracting at most $|X|^2/2$ vertices from $Y$ and so
	$G_\ell[X]\in G_{\leq|X|^2}\nab1$.
	Then $G_\ell[X]$ is actually $2p$-degenerate and
	the first claim follows in exactly the same way as in~\ref{lemma:BipartiteNab1}.

	For the second claim, consider again the set $Y' = Y \cap V(G_\ell)$. The neighborhood of every 
	vertex $v \in Y'$ induces a clique in $G_\ell[X]$, as in Lemma~\ref{lemma:BipartiteNab1}.
	We additionally need a strengthening of Proposition~\ref{prop:DegenerateNumCliques}:

	Assume a graph $H$ and $v\in V(H)$ of degree $d$.
	Then the number of cliques in $H$ containing $v$ is clearly at most
	$\sum_{i=1}^{\omega(H)-1}{d\choose i}\leq d^{\omega(H)-1}$.
	If $H$ is $d$-degenerate, the overall number of cliques in $H$ is
	thus at most $d^{\omega(H)-1}\cdot|H|$.
	In our case of $H=G_\ell[X]$, there are at most	$(2p)^{\omega(G\nab1)-1}\cdot|X|$
	possible cliques in $G_\ell[X]$.
	This quantity accounts for all possible distinct neighborhoods of
	vertices of $Y'$ in $X$, and summing with at most
	$\ell\leq2p\cdot|X|$ neighborhoods of the
	vertices of $Y \setminus V(G_\ell)$ we get (with a large margin) the bound in the second claim.
\end{proof}

\noindent
The following two corollaries are analogues of
Corollary~\ref{cor:LargeDegree} and~\ref{cor:NumClusters} and will be used
in a similar fashion.

\begin{corollary}\label{cor:LargeDegreeGeneral}
	Let $\cal K$ be a nowhere dense graph class, and fix any $\varepsilon>0$ and $\delta\in\N$. 
	Let $q=\omega(\cal K\nab(\delta+1))<\infty$.
	There exists $n_0\in\N$, depending on $\cal K$ and $\varepsilon,\delta$, 
	such that the following holds for every $G \in \cal K$ and
	$S \subseteq V(G)$, $|S|>n_0$:
	If $C_1, \ldots, C_s$ are disjoint connected subgraphs of $G-S$
	satisfying $\diam(G[V(C_i)]) \leq \delta$ and $|N_S(C_i)| > q$
	for $i=1,\ldots,s$, then $s\leq|S|^{1+\varepsilon}$.
\end{corollary} 
\begin{proof}
	We construct an auxilliary  bipartite graph $\bar G$ with partite 
	sets $S$ and $Y = \{C_1,\ldots,C_s\}$. There is an edge between $C_i$ 
	and $x \in S$ iff $x \in N_S(C_i)$.
	As in Corollary~\ref{cor:LargeDegree}, we know that $\bar G$ is a 
	depth-$\delta$ shallow minor of $G$ with branch sets $C_i, 1 \leq i \leq s$,
	and, for any $F\in\bar G\nab1$, it is moreover $F\in G\nab(\delta+1)$.
	In particular, $\omega(\bar G\nab1)\leq\omega(G\nab(\delta+1))\leq q$.
	Though, we will also need the following small refinement of the previous fact:

	Clearly, there exists a connected subgraph $C_i'\subseteq C_i$ such
	that $N_S(C_i')=N_S(C_i)$, $\diam(C_i')\leq2\delta$ and
	$|C_i'|\leq \diam(G[V(C_i)])\cdot|N_S(C_i)|+1<2\delta|S|$\,---simply
	take a vertex $w\in V(C_i)$ together with shortest paths from $w$ to
	selected neighbors of $N_S(C_i)$ in~$C_i$.
	Hence it holds for any $F\in\bar G_{\leq|S|^2}\nab1$ that
	$F\in G_{\leq m}\nab(2\delta+1)$ where $m=|S|^2\cdot2\delta|S|=2\delta|S|^3$.

	Then, using also Proposition~\ref{prop:NowhereDenseEdg},
	$\grad_1\big(\bar G_{\leq|S|^2}\big)\leq
	 \grad_{2\delta+1}\big(G_{\leq m}\big)\leq m^\alpha$ for any
	$\alpha>0$ and all sufficiently large $|G|$ and~$m$.
	We choose $\alpha=\varepsilon/4$.
	By the first claim of Lemma~\ref{lemma:BipartiteGeneral}, for
	$p=m^\alpha$, we get that
	$$s\leq2p|S|=2\big(2\delta|S|^3\big)^{\varepsilon/4}\cdot|S|<|S|^{1+\varepsilon}$$
	whenever $|S|$ is sufficiently large.
\end{proof}

\begin{corollary}\label{cor:NumClustersGeneral} 
	Let $\cal K$ be a nowhere dense graph class, and fix any $\varepsilon>0$ and $\delta\in\N$. 
	There exists $n_0\in\N$, depending on $\cal K$ and $\varepsilon,\delta$, 
	such that the following holds for every $G \in \cal K$ and
	$S \subseteq V(G)$, $|S|>n_0$:
	If $\mathcal C_1, \ldots, \mathcal C_t$ are sets of
	connected components of $G-S$ such that
  \begin{itemize}
  \item[--]for all $C,C' \in \bigcup_i \mathcal{C}_i$ it holds
	that $C, C' \in \mathcal C_j$ for some $j$ if and only if $N_S(C)=N_S(C')$, and
  \item[--]for all $C \in \bigcup_i \mathcal{C}_i$, $\diam(G[V(C)]) \leq\delta$,
  \end{itemize}
	then $t\leq|S|^{1+\varepsilon}$.
\end{corollary}
\begin{proof}
	Let $q=\omega(\cal K\nab(\delta+1))<\infty$.
	As in the proof of Corollary~\ref{cor:LargeDegreeGeneral}, we 
	construct a bipartite graph $\bar G$ with partite sets $S$ 
	and $Y = \{C_1, \ldots, C_r\}$, where the vertices $C_j$,
	$1\leq j\leq r$, represent the connected components in $\bigcup_i \mathcal{C}_i$ 
	and $C_j$ has an edge to $x \in S$ iff $x \in N_S(C_j)$.
	As before, it holds for any $F\in\bar G_{\leq|S|^2}\nab1$ that
        $F\in G_{\leq m}\nab(2\delta+1)$ where $m=2\delta|S|^3$,
	and consequently $\grad_1\big(\bar G_{\leq|S|^2}\big)\leq
         \grad_{2\delta+1}\big(G_{\leq m}\big)\leq m^\alpha$
	for any $\alpha>0$ and all sufficiently large $|G|$ and~$m$.
	
	We now choose $\alpha=\varepsilon/(4q)$ and apply the second claim
	of Lemma~\ref{lemma:BipartiteGeneral};
	\begin{align*}
		t & \leq |\{S' \subseteq S \mid \exists C_i \in Y : N(C_i) = S' \}| \\ 
		  & \leq (2m^\alpha)^{\omega(\bar G\nab1)}\cdot|S|
			\leq 2^q m^{\alpha q}\cdot|S| \\
		  & = 2^q m^{\varepsilon/4}\cdot|S| =
			2^q(2\delta)^{\varepsilon/4}|S|^{3\varepsilon/4}\cdot|S|
			< |S|^{1+\varepsilon}
	\end{align*}
	whenever $|S|$ is sufficiently large.
\end{proof}

\noindent
We are now ready to prove the theorem.
First, the following generalization of Lemma~\ref{lemma:Decomposition} 
follows easily using the above two corollaries.

\begin{lemma}\label{lemma:DecompositionGeneral}
	Let $\cal K$ be a nowhere dense graph class, and fix any $\varepsilon>0$ 
	and $d\in\N$ ($d$ a constant). 
	Let $q=\omega(\cal K\nab 2^d)<\infty$.
	Assume any $G \in \mathcal{G}$ and $S \subseteq V(G)$ a set of vertices such
	that\/ $\td(G-S) \leq d$.
	There is an algorithm that runs in time linear in $|G|$ and 
	partitions $V(G)$ into sets $\YYYY$ such that the following hold:
	\begin{enumerate}
	\item $S \subseteq Y_0$ and $|Y_0| = \bigO(|S|^{1+\varepsilon})$; 
	\item for $1 \leq i \leq \ell$, $Y_i$ induces a set of 
		connected components of $G-Y_0$ that have 
		the same neighborhood in $Y_0$ of size at most 
		$2^{d+1}+q$;
	\item $\ell \leq \bigO(|S|^{1+\varepsilon})$.
	\end{enumerate}
\end{lemma}
\begin{proof}
	We use the same algorithm as in the proof of Lemma~\ref{lemma:Decomposition};
	setting the size of a large neighborhood to $q+1$
	in accordance with the bound in	Corollary~\ref{cor:LargeDegreeGeneral}.
	This proves the first two claims, provided $|S|$ is sufficiently large.
	The third claim then follows from the conclusion of
	Corollary~\ref{cor:NumClustersGeneral}.
	If, on the other hand, $|S|$ is bounded from above by a constant,
	then the claims follow from any trivial estimates;
	e.g., $s\leq|S|^2$ in place of Corollary~\ref{cor:LargeDegreeGeneral} and
	$t\leq|S|^q$ in place of Corollary~\ref{cor:NumClustersGeneral}.
\end{proof}

\begin{proof}[Proof of Theorem~\ref{theorem:KernelNowhereDense}]
	The proof now proceeds in exactly the same way as that of  Theorem~\ref{theorem:KernelBoundedExp}.
\end{proof}

\section{Structural Parameterizations of \Problem{Treewidth} and \Problem{Pathwidth}}\label{sec:Treewidht}
\newcommand{\FSCP}{\textup{FSCP}\xspace}
\newcommand{\FSCT}{\textup{FSCT}\xspace}
\label{sec:fii_pw_tw}

The purpose of this section is to prove that the problems \textsc{Pathwidth} and
\textsc{Treewidth} have \FII on graphs of bounded pathwidth and treewidth,
respectively. We start by showing that this is not the case on the class of all graphs.
\begin{proposition}\label{prop:nofii}
  The problems \textsc{Pathwidth} and \textsc{Treewidth} do not have
  \FII on the class of all graphs.
\end{proposition}
\begin{proof}
  For $w,t \in \N$, let $\widetilde{G}_w=(G_w,\partial(G_w))$ be the
  $t$-boundaried complete graph with $w+t$ vertices.
  We claim that 
  $\widetilde{G}_{w} \not \equiv_{\pw,t} \widetilde{G}_{w+1}$ and $\widetilde{G}_{w} \not \equiv_{\tw,t}
  \widetilde{G}_{w+1}$ for every $w \in \N$ with $w > t$. This shows that neither
  $\equiv_{\pw,t}$ nor $\equiv_{\tw,t}$ is finite and concludes the
  proof of the theorem.

  Let $\widetilde{H}_1=\widetilde{G}_w$ and $\widetilde{H}_2=\widetilde{G}_{w+1}$. Then,
  $\pw(\widetilde{G}_w \oplus \widetilde{H}_1)=\tw(\widetilde{G}_w
  \oplus \widetilde{H}_1)=t+w$
  and 
  $\pw(\widetilde{G}_{w+1} \oplus \widetilde{H}_1)=\tw(\widetilde{G}_{w+1}
  \oplus \widetilde{H}_1)=t+w+1$
  but
  $\pw(\widetilde{G}_w \oplus \widetilde{H}_2)=\tw(\widetilde{G}_w
  \oplus \widetilde{H}_2)=t+w+1$
  and 
  $\pw(\widetilde{G}_{w+1} \oplus \widetilde{H}_2)=\tw(\widetilde{G}_{w+1}
  \oplus \widetilde{H}_2)=t+w+1$, as required.
\end{proof}

\noindent
In the rest of the section we focus on proving that the problems \textsc{Pathwidth} and \textsc{Treewidth} have \FII on graphs of bounded pathwidth and treewidth,
respectively.
Compared to the path and cycle problems treated in
Lemma~\ref{lemma:LongestPathFII} and the branchwidth problem as in
Lemma~\ref{lemma:BranchwidthFII}, the proofs here are much more involved and
use the notion of characteristics of path
decompositions and tree decompositions, which have been introduced
in~\cite{BK96}. Because
the definition of these characteristics is quite technical and the
properties we require have
already been shown in~\cite{BK96}, we will
not provide a formal definition.
Instead, we will only state the required properties 
and refer the reader to~\cite{BK96} for details and proofs.
 
The concept of a \emph{characteristic} of a partial path decomposition of a graph---or
equivalently the characteristic of a path decomposition of a boundaried graph---was 
introduced by Bodlaender and Kloks in~\cite[Definition~4.4]{BK96}.
Informally, the characteristic of a path decomposition $\mathcal{P}$ of $\widetilde{G}$ 
compactly represents all the information required
to compute, for any $\widetilde{H}$, the ways $\mathcal{P}$ can be extended into a path decomposition of
the graph $\widetilde{G}\oplus \widetilde{H}$.
This information can then
be used to compute the pathwidth of the graph $\widetilde{G}\oplus \widetilde{H}$.
Importantly, the number
of characteristics of path decompositions of width at most $w$ of any
$t$-boundaried graph only depends on $t$ and $w$, but not on the
the graph itself.

\begin{proposition}[{{\rm \cite[Lemma 4.1]{BK96}}}]\label{pro:pw-number-char}
  Let $\widetilde{G}$ be a $t$-boundaried graph and $w$ an integer. Then
  the number of characteristics of path decompositions of width at
  most $w$ of $\widetilde{G}$ is bounded by a function of $t$ and $w$.
\end{proposition}

\noindent
For integer $w$, the \emph{full set of (path decomposition)
  characteristics} of $\widetilde{G}$ of width at most
$w$ (as defined in~\cite[Definition 4.6]{BK96}), denoted
by $\FSCP_w(\widetilde{G})$, is the set of all characteristics of path
decompositions of $\widetilde{G}$ of width at most $w$. We denote by
$\FSCP(\widetilde{G})$ the (possibly infinite) set $\bigcup_{w \in
  \N}\FSCP_w(\widetilde{G})$.
  
\begin{proposition}[{\cite[Section 4.3]{BK96}}]\label{pro:pw-characteristics}
  Let $\widetilde{H}$, $\widetilde{G}_1$ and
  $\widetilde{G}_2$ be $t$-boundaried graphs, and let $\mathcal{P}$ be a path
  decomposition of $\widetilde{G}_1 \oplus \widetilde{H}$. If the
  (unique) characteristic of $\mathcal{P}|G_1$
  is in $\FSCP(\widetilde{G}_2)$, then there
  is a path decomposition of $\widetilde{G}_2 \oplus \widetilde{H}$
  that has the same width as $\mathcal{P}$. 
\end{proposition}
\begin{proof}[Sketch]
  For $i \in \{1,2\}$, let $\mathcal{P}_i$ be any path decomposition of
  $\widetilde{G}_i$ such that the content of the last bag of $\mathcal{P}_i$
  is $\partial(G_i)$ and let $\mathcal{P}_3$ be any path decomposition of
  $\widetilde{H}$ such that the content of the first bag of $\mathcal{P}_3$ is
  $\partial(H)$. Furthermore, for $i \in \{1,2\}$, let $\mathcal{P}_{i,3}$ be the path
  decomposition of $\widetilde{G}_i \oplus \widetilde{H}$ obtained from $\mathcal{P}_{i}$ and $\mathcal{P}_3$ by appending 
  the first bag of $\mathcal{P}_3$ to the last bag of $\mathcal{P}_i$, let
  $p_{i,3}$ be the bag of $\mathcal{P}_{i,3}$ that corresponds to the last
  bag of $\mathcal{P}_i$, and let $l_{i,3}$ be the last bag of $\mathcal{P}_{i,3}$.

  Now assume that we run the algorithm described in~\cite[Section
  4.3]{BK96} on the path decomposition $\mathcal{P}_{i,3}$ and let
  $F(p_{i,3})$ and $F(l_{i,3})$
  be the full set of characteristics of partial path decompositions
  computed at the node $p_{i,3}$ and the node $l_{i,3}$, respectively, 
  of width at most the width of $\mathcal{P}$. Then, by the definition of a
  full set of characteristics, we obtain that $F(p_{1,3})$ contains
  the characteristic of $\mathcal{P}|G_1$ and that $F(l_{1,3})$ contains
  the characteristic of $\mathcal{P}$. Moreover, the characteristic of $\mathcal{P}$
  in $F(l_{1,3})$ is generated by the algorithm from the
  characteristic of $\mathcal{P}|G_1$ in $F(p_{1,3})$.
  By the assumptions of the
  Proposition, we have that the characteristic of $\mathcal{P}|G_1$ is
  contained in $\FSCP(\widetilde{G}_2)$ and hence also in
  $F(p_{2,3})$. Hence, because the path decompositions $\mathcal{P}_{1,3}$ 
  and $\mathcal{P}_{2,3}$ are identical with respect to everything behind the
  nodes $p_{1,3}$ and $p_{2,3}$, respectively, we obtain that the
  characteristic of $\mathcal{P}$ is also contained in $F(l_{2,3})$,
  witnessing that $\widetilde{G}_2 \oplus \widetilde{H}$ has a path
  decomposition with the same width as $\mathcal{P}$.
\end{proof}

\noindent
The above Proposition illuminates the usefulness of characteristics to show
\FII for the \textsc{Pathwidth} problem. In particular, it follows that if
$\FSCP(\widetilde{G}_1)=\FSCP(\widetilde{G}_2)$, then $\widetilde{G}_1
\equiv_{\pw,t} \widetilde{G}_2$, for all $t$-boundaried graphs
$\widetilde{G}_1$ and $\widetilde{G}_2$. Hence, the full set of
characteristics of a boundaried graph fully describes its equivalence
class with respect to $\equiv_{\pw,t}$. However, as mentioned above
the full set of characteristics of a boundaried graph can be
infinite. We will show in the next section that if we consider \FII
with respect to a class $\cal C$ of graphs of bounded pathwidth, then it is
sufficient to consider the set $\FSCP_{(\pw(\widetilde{G})+t)}(\widetilde{G})$
instead of $\FSCP(\widetilde{G})$ for every $t$-boundary graph $\widetilde{G}=(G,\partial(G))$ with
$G \in \cal C$. Because $\pw(\widetilde{G})$ is bounded by a constant,
the set of characteristics $\FSCP_{(\pw(\widetilde{G})+t)}$ is finite due to
Proposition~\ref{pro:pw-number-char}.

In the following we introduce characteristics for tree decompositions
of boundaried graphs. All the
explanations for characteristics of path decompositions transfer to
characteristics of tree decompositions and we will not repeat them here.
In~\cite[Definition 5.9]{BK96} the authors define the
\emph{characteristic} of a tree decomposition of a boundaried
graph. They show the following:
\begin{proposition}[{\cite[Remark below Lemma 5.3]{BK96}}]
  Let $\widetilde{G}$ be a $t$-boundaried graph and $w$ an integer. Then
  the number of characteristics of tree decompositions of width at
  most $w$ of $\widetilde{G}$ is bounded by a function of $t$ and $w$.
\end{proposition}

\noindent
For an integer $w$, the
\emph{full set of (tree decomposition) characteristics} of $\widetilde{G}$ of width at most
$w$ (as defined in~\cite[Definition 5.11]{BK96}), denoted
by $\FSCT_w(\widetilde{G})$, is the set of all characteristics of tree
decompositions of $\widetilde{G}$ of width at most $w$. We denote by
$\FSCT(\widetilde{G})$ the (possible infinite) set $\bigcup_{w \in
  \N}\FSCT_w(\widetilde{G})$.
\begin{proposition}[{\cite[Section 5.3]{BK96}}]\label{pro:tw-characteristics}
  Let $\widetilde{H}$, $\widetilde{G}_1$ and
  $\widetilde{G}_2$ be $t$-boundaried graphs, and let $\mathcal{T}$ be a tree
  decomposition of $\widetilde{G}_1 \oplus \widetilde{H}$. If the
  (unique) characteristic of $\mathcal{P}|G_1$
  is in $\FSCT(\widetilde{G}_2)$, then there
  is a tree decomposition of $\widetilde{G}_2 \oplus \widetilde{H}$
  that has the same width as $\mathcal{T}$. 
\end{proposition}

\subsection{\textsc{Pathwidth} has FII on graphs of small pathwidth}\label{sec:pwfii}

As stated in the previous section, we will make use of characteristics of
path decompositions of boundaried graphs to show
\FII{} for the \textsc{Pathwidth} problem in a class of graphs of bounded
pathwidth. In particular, we will show that the equivalence relation $\simeq_{\pw,t}$
defined by 
$$\widetilde{G}_1\simeq_{\pw,t} \widetilde{G}_2
	\mbox{~~if and only if~~}
\FSCP_{(\pw(G_1)+t)}(\widetilde{G}_1)=\FSCP_{(\pw(G_2)+t)}(\widetilde{G}_2)$$
is a refinement of the equivalence relation $\equiv_{\pw,t}$.
The following lemma, which we believe to be
interesting in its own right, is central to our proof.
\begin{lemma}\label{lem:pw-projection}
  Let $\widetilde{G}_1,\widetilde{G}_2$ be two $t$-boundaried graphs,
  $G = \widetilde{G}_1 \oplus \widetilde{G}_2$, and
  $\mathcal{P} = (P,\chi)$ be a path decomposition of $G$. Then there is a path
  decomposition $\mathcal{P}' = (P',\chi')$ of $G$ of the same width as $\mathcal{P}$ 
  such that $\mathcal{P}'|G_1$ has width at most \mbox{$\pw(G_1)+t$}.
\end{lemma}

\begin{proof}
  
  If $\mathcal{P}|G_1$ has width at most $\pw(G_1)+ t$, then
  $\mathcal{P}':=\mathcal{P}$ is the required path decomposition of $G$. 
  Otherwise,
  there is a bag $p \in V(P)$ such that $|\chi(p) \cap
  V(G_1)|>\pw(G_1)+t+1$. Call such a bag $p$ a \emph{bad bag} of
  $\mathcal{P}$. The next claim shows that we can eliminate the
  bad bags of $\mathcal{P}$ one by one without introducing new bad bags. 
  Hence, we obtain the desired path decomposition $\mathcal{P}'$ from $\mathcal{P}$
  by a repeated application of the following claim:
  \begin{claim}
    There is a path decomposition $\mathcal{P}''=(P'',\chi'')$ of $G$ of the
    same width as $\mathcal{P}$ such that the set of bad bags of $\mathcal{P}''$ is
    a proper subset of the set of bad bags of $\mathcal{P}$. Moreover, the
    bag $p$ is no longer a bad bag of $\mathcal{P}''$.
  \end{claim}
  \noindent
  Let $\chi_{G_1}(p)$ be the set of vertices $\chi(p) \cap V(G_1)$
  and let $S$ be a minimal separator between $\chi_{G_1}(p)$ and $\partial(G_1)$
  in the graph $G$. Since $\partial(G_1)$ separates $\chi_{G_1}(p)$ from $\partial(G_1)$ 
  and is of cardinality at most $t$, we obtain that $|S| \leq t$.
  Let $W$ be the set of all vertices reachable from $\chi_{G_1}(p)$ in
  $G\setminus S$, and let $\mathcal{P}_{W}=(P_{W},\chi_{W})$
  be an optimal path
  decomposition of $G[W]$. Then, because $W \subseteq V(G_1)$, it follows
  that the width $\mathcal{P}_{W}$ is at most the pathwidth of $G_1$.

  To obtain the desired path decomposition $\mathcal{P}''$, where $p$ is not a
  bad bag anymore, we delete all vertices of $W$
  from the bags of $\mathcal{P}$ and, instead, insert the path decomposition $\mathcal{P}_W$
  between $p$ and an arbitrary neighbor of $p$ in $P$. To ensure Property P3 of
  a path decomposition for the vertices in $\chi(p) \setminus V(G_1)$, we add
  $\chi(p) \setminus V(G_1)$ to every bag of $\mathcal{P}_W$ in $\mathcal{P}''$.
  Furthermore, to cover the
  edges between $S$ and $W$ in $G$ we also need to add $S$ to $p$ and every
  bag of $\mathcal{P}_W$. Because $\chi(p)$ does not necessarily
  contain all vertices of $S$, this could potentially
  violate the Property P3 of a path decomposition. To get around this
  we will add a vertex $s \in S$ to every bag $p' \in V(P)$ in between
  $p$ and any bag containing $s$, i.e.,
  we complete $\mathcal{P}''$ into a valid path decomposition in a minimal way. 
  This completes the construction of $\mathcal{P}''$ and it remains to argue that adding
  these vertices from $S$ does not increase the width of any bag in $\mathcal{P}$.
  Suppose it does, and let $p_2$ be a bag
  where we add more vertices than we remove.
  It follows that there is a bag $p_1 \in V(P)$ such that $p_2$
  lies on the path from $p_1$ to $p$ in $P$ and $|R|<|S'|$, where 
  $R=\chi(p_2) \cap W$ and $S'=(\chi(p_1)\setminus \chi(p_2)) \cap
  S$. 
  Note that in $\mathcal{P}|G[W \cup S']$ we have $\chi_{G_1[W \cup S']}(p_2)= R$.
  Because of Proposition 1 applied to $\mathcal{P}|G_1[W \cup S']$, $	R$ separates $\chi_{G_1[W\cup S']}(p)$ from $S'$ in $G_1[W \cup S']$.  

  We claim that $S''=(S \setminus S')\cup R$ is a separator between
  $\chi_{G_1}(p)$ and $\partial(G_1)$. Since $|S''|<|S|$, this would
  contradict the minimality of $S$.  Let $\Pi$ be a path between
  $\chi_{G_1}(p)$ and $\partial (G_1)$. Since $\chi_{G_1}(p) \subseteq W
  \cup S$, $\Pi$ has to intersect $S$ in order to reach $\partial
  (G_1)$. Let $s$ be the first vertex of $\Pi$ which intersects $S$
  (note that the subpath from $\chi_{G_1}(p$) to $s$ of $\Pi$ lies
  entirely in $W$). Either $s \in S \setminus S'$ and therefore $s \in
  S''$, or $s \in S'$ and the subpath from $\chi_{G_1}(p$) to $s$ of
  $\Pi$ lies entirely in $W \cup S'$, and therefore $\Pi$ has to
  intersect $R \subseteq S''$ in order to reach $s$. It follows that
  $S''$ is indeed a separator between $\chi_{G_1}(p)$ and $\partial
  (G_1)$, completing the proof.
\end{proof}

\noindent
We note here that the bound for the pathwidth given in the above
lemma is essentially tight. To see this consider the complete bipartite
graph $G$ that has $t$ vertices on one side (side $A$) and $t+1$ vertices
on the other side (side $B$). Let $\widetilde{G}_1$ be the graph
$G[A]$ with boundary $A$, let $\widetilde{G}_2$ be the graph $G$
with boundary $A$, and let $\mathcal{P}$ be any optimal path decomposition
of $\widetilde{G}_1\oplus \widetilde{G}_2=G$. Then, because $G$ is a
complete bipartite graph, whose smaller side is $A$, it holds
that $\mathcal{P}$ contains a bag containing $A$. Consequently,
$\pw(\mathcal{P}|G_1)=t-1$ while $\pw(G_1)=0$.

\begin{corollary}\label{cor:pw-projection}
  Let $\widetilde{G}_1$ and $\widetilde{G}_2$ be two $t$-boundaried
  graphs and
  $G=\widetilde{G}_1 \oplus \widetilde{G}_2$. Then there is an optimal path
  decomposition $\mathcal{P}$ of $G$ such that $\mathcal{P}|G_1$ has width at most
  $\pw(G_1)+t$.
\end{corollary}

\noindent
The following lemma shows that $\simeq_{\pw,t}$ is a refinement of $\equiv_{\pw,t}$.
\begin{lemma}\label{lem:pw-refinement}
  Let  $\widetilde{G}_1$ and
  $\widetilde{G}_2$ be two $t$-boundaried graphs. If 
  $\widetilde{G}_1 \simeq_{\pw,t} \widetilde{G}_2$
  , then
  $\widetilde{G}_1 \equiv_{\pw, t} \widetilde{G}_2$.
\end{lemma}
\begin{proof}\sloppypar
  Let $\widetilde{G}_1$ and
  $\widetilde{G}_2$ be two $t$-boundaried graphs such that
  ${\widetilde{G}_1 \simeq_{\pw,t} \widetilde{G}_2}$ and hence
  ${\FSCP_{(\pw(G_1)+t)}(\widetilde{G}_1)=\FSCP_{(\pw(G_2)+t)}(\widetilde{G}_2)}$.
  We show that
  ${\pw(\widetilde{G}_1 \oplus \widetilde{H}) \leq \probpar}$ if and
  only if $\pw(\widetilde{G}_2 \oplus \widetilde{H}) \leq \probpar$
  for any $t$-boundaried graph $\widetilde{H}$ and any $\probpar \in
  \N$. This implies $\widetilde{G}_1 \equiv_{\pw, t} \widetilde{G}_2$
  with $\Delta_{\pw, t}(\widetilde{G}_1,\widetilde{G}_2)=0$.

  Let $\widetilde{H}$ and $\probpar$ be such that
  $\pw(\widetilde{G}_1 \oplus \widetilde{H}) \leq \probpar$. 
  It follows from Corollary~\ref{cor:pw-projection} that there is a path
  decomposition $\mathcal{P}=(P,\chi)$ of $\widetilde{G}_1 \oplus \widetilde{H}$ of width at
  most $\probpar$ such that $\mathcal{P}|G_1$ is a path decomposition of
  $G_1$ of width at most $\pw(G_1)+t$. Hence, there is a
  characteristic in $\FSCP_{(\pw(G_1)+t)}(\widetilde{G}_1)$
  corresponding to $\mathcal{P}|G_1$. Since
  $\FSCP_{(\pw(G_1)+t)}(\widetilde{G}_1)=\FSCP_{(\pw(G_2)+t)}(\widetilde{G}_2)$,
  we have that $\widetilde{G}_2$ has the same characteristic.
  It now follows
  from Proposition~\ref{pro:pw-characteristics} that there is a path
  decomposition of $\widetilde{G}_2\oplus \widetilde{H}$ that has the
  same width as $\mathcal{P}$ and hence $\pw(\widetilde{G}_2 \oplus
  \widetilde{H}) \leq \probpar$, as required. Because the reverse
  direction is analogous, this concludes the proof of the lemma.
\end{proof}

\noindent
We are now ready to show the main result of this subsection, i.e., that the \textsc{Pathwidth} problem has \FII{} on graphs of
bounded pathwidth.
\begin{theorem}\label{the:pw-fii}
  For $w \in \N$, let $\mathcal{PW}_w$ be a class of graphs
  that have pathwidth at most $w$.
  Then, the problem \textsc{Pathwidth}
  has \FII in $\mathcal{PW}_w$.
\end{theorem}
\begin{proof}
  Because $\pw(G_1)\leq w$ and
  $\pw(G_2)\leq w$, it follows from
  Proposition~\ref{pro:pw-number-char}
  that the number of equivalence classes of $\simeq_{\pw,t}$ is
  finite for every $t \in \N$. Furthermore, because of Lemma~\ref{lem:pw-refinement} it
  holds that $\simeq_{\pw,t}$ is a refinement of $\equiv_{\pw, t}$,
  which concludes the proof of the theorem.
\end{proof}

\subsection{\textsc{Treewidth} has FII on graphs of small treewidth}\label{sec:twfii}

As the main ideas of the proof for treewidth are the same as for
pathwidth (see the previous section),
we present in details only the first step, Lemma~\ref{lem:tw-projection},
which is different from former Lemma~\ref{lem:pw-projection}.
\begin{lemma}\label{lem:tw-projection}
  Let $\widetilde{G}_1$ and $\widetilde{G}_2$ be two $t$-boundaried graphs,
  $G=\widetilde{G}_1 \oplus \widetilde{G}_2$, and
  $\mathcal{T}=(T,\chi)$ be a tree decomposition of $G$. Then there is a tree
  decomposition $\mathcal{T}'=(T',\chi')$ of $G$ with the same
  width as $\mathcal{T}$ such that $\mathcal{T}'|G_1$ has
  width at most $\tw(G_1)+t$.
\end{lemma}
\begin{proof}
  If $\mathcal{T}|G_1$ has width at most $\tw(G_1)+t$, then
  $\mathcal{T}':=\mathcal{T}$ is the required tree decomposition of $G$. Hence,
  there is a bag $p \in V(T)$ such that $|\chi(p) \cap
  V(G_1)|>\tw(G_1)+t+1$. We call such a bag $p$ a \emph{bad} bag of
  $\mathcal{T}$. 
  The next claim shows that we can eliminate the
  bad bags of $\mathcal{T}$ one by one without introducing new bad bags. 
  Hence, we obtain the desired tree decomposition $\mathcal{T}'$ from $\mathcal{T}$
  by a repeated application of the following claim.
  \begin{claim}
    There is a tree decomposition $\mathcal{T}''=(T'',\chi'')$ of
    $G$ of the same width as $\mathcal{T}$ such that the set of bad bags of $\mathcal{T}''$ is
    a proper subset of the set of bad bags of $\mathcal{T}$. Moreover, the
    bag $p$ is no longer a bad bag of $\mathcal{T}''$.
  \end{claim}
  \noindent
  Let $\chi_{G_1}(p)$ be the set of vertices in $\chi(p) \cap V(G_1)$
  and let $S$ be a minimal separator between $\chi_{G_1}(p)$ and $\partial(G_1)$
  in the graph $G$. Then, because $\partial(G_1)$ is
  a separator between $\chi_{G_1}(p)$ and $\partial(G_1)$ of
  cardinality at most $t$, we obtain that $|S| \leq t$.
  Let $W$ be the set of all vertices reachable from $\chi_{G_1}(p)$ in
  $G\setminus S$, and let $\mathcal{T}_{W}=(T_{W},\chi_{W})$
  be an optimal tree
  decomposition of $G[W]$. Then, because $W \subseteq V(G_1)$, it follows
  that the width $\mathcal{T}_{W}$ is at most the treewidth of $G_1$.
  
  To obtain the desired tree decomposition $\mathcal{T}''$, where $p$ is not a
  bad bag anymore, we delete all vertices of $W$
  from the bags of $\mathcal{T}$ and, instead, insert the tree decomposition $\mathcal{T}_W$
  by connecting any node of $\mathcal{T}_W$ via an edge to $p$ in $T$. 
  However, to cover the
  edges between $S$ and $W$ in $G$ we also need to add $S$ to $p$ and every
  bag of $\mathcal{T}_W$. Because $\chi(p)$ does not necessarily
  contain all vertices of $S$, this could potentially
  violate the property P3 of a tree decomposition. To get around this
  we will add a vertex $s \in S$ to every bag $p' \in V(T)$ that is on a path
  between $p$ and any bag containing $s$ in $T$, i.e.,
  we complete $\mathcal{T}''$ into a valid tree decomposition in a minimal way. 
  This completes the construction of $\mathcal{T}''$ and it remains to argue that adding
  these vertices from $S$ does not increase the width of any bag in $\mathcal{T}$.
  Suppose it does, and let $p_2$ be a bag
  where we add more vertices than we remove. Let $S' \subseteq S$ be the set of added vertices and $R=\chi(p_2) \cap W$ the set of removed vertices. It follows that $|R|<|S'|$ and the bag $p_2$ separates in $T$ the set of bags containing a vertex from $S'$ from the bag $p$.
  Note that in $\mathcal{T}|G[W \cup S']$ we have $\chi_{G_1[W \cup S']}(p_2)= R$.
  Because of Proposition 1 applied to $\mathcal{T}|G_1[W \cup S']$, $R$ separates $\chi_{G_1[W\cup S']}(p)$ from $S'$ in $G_1[W \cup S']$.

  We claim that $S''=(S \setminus S')\cup R$ is a separator between
  $\chi_{G_1}(p)$ and $\partial(G_1)$. Since $|S''|<|S|$, this would
  contradict the minimality of $S$.  Let $\Pi$ be a path between
  $\chi_{G_1}(p)$ and $\partial (G_1)$. Since $\chi_{G_1}(p) \subseteq W
  \cup S$, $\Pi$ has to intersect $S$ in order to reach $\partial
  (G_1)$. Let $s$ be the first vertex of $\Pi$ which intersects $S$
  (note that the subpath from $\chi_{G_1}(p$) to $s$ of $\Pi$ lies
  entirely in $W$). Either $s \in S \setminus S'$ and therefore $s \in
  S''$, or $s \in S'$ and the subpath from $\chi_{G_1}(p$) to $s$ of
  $\Pi$ lies entirely in $W \cup S'$, and therefore $\Pi$ has to
  intersect $R \subseteq S''$ in order to reach $s$. It follows that
  $S''$ is indeed a separator between $\chi_{G_1}(p)$ and $\partial
  (G_1)$, completing the proof.

\end{proof}

\begin{corollary}\label{cor:tw-projection}
  Let $\widetilde{G}_1$ and $\widetilde{G}_2$ be two $t$-boundaried
  graphs and 
  $G=\widetilde{G}_1 \oplus \widetilde{G}_2$. Then there is an optimal tree
  decomposition $\mathcal{T}$ of $G$ such that $\mathcal{T}|G_1$ has width at most
  $\tw(G_1)+t$.
\end{corollary}

\noindent
Employing a technical lemma analogous to Lemma~\ref{lem:pw-refinement},
we obtain our main result of the subsection.
\begin{theorem}\label{the:tw-fii}
  For $w \in \N$, let $\mathcal{TW}_w$ be a class of graphs
  that have treewidth at most $w$.
  Then, the problem \textsc{Treewidth}
  has \FII in $\mathcal{TW}_w$.
\end{theorem}
\begin{proof}
    The proof is analogous to the proof of Theorem~\ref{the:pw-fii}.
\end{proof}

\noindent
Overall, we can conclude the whole section analogously to
Corollary~\ref{cor:fiiBoundedTreedepth}:

\begin{corollary}\label{cor:fiiSpecialTWPW}
	The problems 
		\name{Pathwidth} and 
		\name{Treewidth}
	have  linear kernels in graphs of bounded expansion 
	with the size of a modulator to constant treedepth as the parameter.
\end{corollary}


\section{Conclusions and Further Research}\label{sec:Conclusion}
We have presented kernelization meta-results on graph classes of
bounded expansion and on nowhere dense classes.  More specifically, we
have shown that all problems with \FII
on graphs of bounded treedepth admit linear problem kernels
on graph classes of bounded expansion when parameterized by the size
of a modulator to constant treedepth. For nowhere dense classes,
we have shown that the kernels have almost-linear size.

The choice of our parameter (treedepth-modulator) is not arbitrary;
as discussed in the introduction, e.g., a modulator to
constant treewidth cannot yield linear kernels for certain natural problems
that one would like to include in the framework.
As argued before, this problem can be resolved {\em only} by
choosing a parameter that generally increases when subdiving edges.
Treedepth, which can be asymptotically characterized by absence of long
paths as a subgraph, is thus a very natural choice for our purpose.

It remains an open question whether polynomial kernels
(under a suitable weaker parametrization) exist for
problems which are not invariant under edge subdivisions, such as
\name{Hamiltonian Cycle}. Furthermore,
our framework is general enough that it might apply to graph classes
which are not part of the sparse graph hierarchy. A meta-kernel result
for a dense graph class would be especially interesting. Recent work
has shown that a linear kernel for classes of bounded expansion and an
almost linear kernel for nowhere dense graph classes for
\name{Dominating Set} exist when parameterized by the natural
parameter \cite{DDFKLPPRSSS14}. This provides some hope that further
problems admit such kernels since \name{Dominating Set} has acted
as a catalyst for a flurry of results before (in fact, it was the problem
that initiated the search for linear kernels on planar graphs).

Finally, it would be interesting to obtain a  natural characterization of
problems that have FII on graphs of bounded treedepth.


\def\redefineme{
    \def\LNCS{LNCS}%
    \def\ICALP##1{Proc. of ##1 ICALP}%
    \def\FOCS##1{Proc. of ##1 FOCS}%
    \def\COCOON##1{Proc. of ##1 COCOON}%
    \def\SODA##1{Proc. of ##1 SODA}%
    \def\SWAT##1{Proc. of ##1 SWAT}%
    \def\IWPEC##1{Proc. of ##1 IWPEC}%
    \def\IWOCA##1{Proc. of ##1 IWOCA}%
    \def\ISAAC##1{Proc. of ##1 ISAAC}%
    \def\STACS##1{Proc. of ##1 STACS}%
    \def\IWOCA##1{Proc. of ##1 IWOCA}%
    \def\ESA##1{Proc. of ##1 ESA}%
    \def\WG##1{Proc. of ##1 WG}%
    \def\LIPIcs##1{LIPIcs}%
    \def\LIPIcs{LIPIcs}%
    \def\LICS##1{Proc. of ##1 LICS}%
}

\bibliographystyle{elsarticle-num}
\bibliography{biblio,conf}

\clearpage

\section{Appendix}\label{sec:Appendix}
In this appendix, we define some of the problems that we mention 
in this paper.  

\begin{defproblem}{Longest Path}
	Input: & A graph $G$ and a positive integer $\ell$.  \\
	Problem: & Does $G$ contain a simple path of length at least $\ell$? 
\end{defproblem}

\begin{defproblem}{Longest Cycle}
	Input:   & A graph $G$ and a positive integer $\ell$.  \\
	Problem: & Does $G$ contain a simple cycle of length at least $\ell$? 
\end{defproblem}

\begin{defproblem}{Exact $s,t$-Path}
	Input: & A graph $G$, two special vertices~$s,t \in V(G)$ and a positive integer $\ell$.  \\
	Problem: & Is there a simple path in $G$ from $s$ to $t$ of length exactly $\ell$? 
\end{defproblem}

\begin{defproblem}{Exact Cycle}
	Input: & A graph $G$ and a positive integer $\ell$.  \\
	Problem: & Is there a simple cycle in $G$ of length exactly $\ell$? 
\end{defproblem}

\begin{defproblem}{Feedback Vertex Set}
	Input:   & A graph $G$ and a positive integer $\ell$.  \\
	Problem: & Is there a vertex set~$S \subseteq V(G)$ with at most $\ell$ vertices such that $G-S$ 
		 is a forest? 
\end{defproblem}

\begin{defproblem}{Treewidth}
	Input:   & A graph $G$ and a positive integer $\ell$.  \\
	Problem: & Is the treewidth of~$G$ at most $\ell$? 
\end{defproblem}

\begin{defproblem}{Pathwidth}
	Input:   & A graph $G$ and a positive integer $\ell$.  \\
	Problem: & Is the pathwidth of~$G$ at most $\ell$? 
\end{defproblem}

\begin{defproblem}{Treewidth-$t$ Vertex Deletion}
	Input:   & A graph $G$ and a positive integer $\ell$.  \\
	Problem: & Is there a vertex set~$S \subseteq V(G)$ with at most $\ell$ vertices
		 such that the treewidth of $G-S$ is at most~$t$? 
\end{defproblem}

\begin{defproblem}{Dominating Set}
	Input:   & A graph $G = (V,E)$ and a positive integer $\ell$.  \\
	Problem: & Is there a vertex set~$S \subseteq V$ with at most $\ell$ vertices
		 such that for all $u \in V \setminus S$ there exists $v \in S$ such that $uv \in E$? 
\end{defproblem}

\smallskip
\noindent If in addition, we require that $G[S]$ is a connected graph then the
problem is called \name{Connected Dominating Set}.

\begin{defproblem}{$r$-Dominating Set}
	Input:   & A graph $G = (V,E)$ and a positive integer $\ell$.  \\
	Problem: & Is there a vertex set~$S \subseteq V$ with at most $\ell$ vertices
		 such that for all $u \in V \setminus S$ there exists $v \in S$ such that $d(u,v) \leq r$? 
\end{defproblem}

\begin{defproblem}{Efficient Dominating Set}
	Input:   & A graph $G = (V,E)$ and a positive integer $\ell$.  \\
	Problem: & Is there an independent set~$S \subseteq V$ with at most $\ell$ vertices
		 such that for every $u \in V \setminus S$ there exists exactly one $v \in S$ such that $uv \in E$? 
\end{defproblem}

\begin{defproblem}{Edge Dominating Set}
	Input:   & A graph $G = (V,E)$ and a positive integer $\ell$.  \\
	Problem: & Is there an edge set~$S \subseteq E$ of size at most $\ell$ 
		 such that for every $e \in E \setminus S$ there exists $e' \in S$ such that $e$ and $e'$ share an endpoint? 
\end{defproblem}

\begin{defproblem}{Induced Matching}
	Input:   & A graph $G = (V,E)$ and a positive integer $\ell$.  \\
	Problem: & Is there an edge set~$S \subseteq E$ of size at least $\ell$ 
		 such that $S$ is a matching and for all $u,v \in V(S)$, if $uv \in E$ then $uv \in S$? 
\end{defproblem}

\begin{defproblem}{Chordal Vertex Deletion}
	Input:   & A graph $G = (V,E)$ and a positive integer $\ell$.  \\
	Problem: & Is there a vertex set~$S \subseteq V$ of size at most $\ell$ 
		 such that $G-S$ is chordal? 
\end{defproblem}

\begin{defproblem}{$\cal F$-Minor-Free Deletion}
	Input:   & A graph $G = (V,E)$ and a positive integer $\ell$.  \\
	Problem: & Is there a vertex set~$S \subseteq V$ of size at most $\ell$ 
		 such that $G-S$ does not contain any graph of the (finite)
		 family~$\cal F$ as a minor? 
\end{defproblem}

\end{document}